\newtheorem{definition}{Definition}
\newtheorem{theorem}{Theorem}
\newtheorem{lemma}{Lemma}
\newtheorem{corollary}{Corollary}
\theoremstyle{plain}
\newtheorem{clm}{Claim}{\it}{\rm}
\DeclareMathOperator*{\argmax}{arg\,max}
\newcommand{\Exp}[1]{\mathbb{E}\left[\,#1\,\right]}
\renewcommand{\Pr}[1]{\mathbf{Pr}\left[\;#1\;\right]}
\newcommand{\x}{\mathbf{x}}
\newcommand{\y}{\mathbf{y}}
\newcommand{\z}{\mathbf{z}}
\newcommand{\s}{\mathbf{s}}
\newcommand{\ttt}{\mathbf{t}}
\newcommand{\q}{\mathbf{q}}
\newcommand{\rr}{\mathbf{r}}
\tikzset{global scale/.style={
		scale=0.6,
		every node/.style={scale=0.6}
	}
}
\newcommand{\remove}[1]{}
\begin{document}

\title{Impartial Selection with Additive Approximation Guarantees\thanks{
	A preliminary version \cite{caragiannis2019impartial} of this paper was presented in the 12th International Symposium on Algorithmic Game Theory (SAGT '19).}
}

\author{
	Ioannis Caragiannis\thanks{ Department of Computer Science,
		Aarhus University, \textit{iannis@cs.au.dk}} \and
	George Christodoulou\thanks{ Department of Computer Science,
		Aristotle University of Thessaloniki, \textit{ gichristo@csd.auth.gr }} \and
	Nicos Protopapas\thanks{Department of Computer Science,
		University of Liverpool, \textit{N.Protopapas@liverpool.ac.uk}}
}
\date{}

\maketitle

\begin{abstract}
 Impartial selection has recently received much attention within the
multi-agent systems community. The task is, given a directed graph
representing nominations to the members of a community by other
members, to select a member with the highest number of
nominations. This seemingly trivial goal becomes challenging when
there is an additional impartiality constraint, requiring that no
single member can influence her chance of being selected. Recent
progress has identified impartial selection rules with optimal
approximation ratios. Moreover, it was noted that worst-case
instances are graphs with few vertices. Motivated by this fact, we
propose the study of {\em additive approximation}, the difference between
the highest number of nominations and the number of nominations of the selected member,
as an alternative measure of the quality of impartial selection.

Our positive results include two randomized impartial selection
mechanisms which have additive approximation guarantees of
$\Theta(\sqrt{n})$ and $\Theta(n^{2/3}\ln^{1/3}n)$ for the two most
studied models in the literature, where $n$ denotes the community size. 
We complement our positive results by providing negative results for
various cases. First, we provide a characterization for the
interesting class of strong sample mechanisms, which allows us to
obtain lower bounds of $n-2$, and of $\Omega(\sqrt{n})$ for their
deterministic and randomized variants respectively. Finally, we present
a general lower bound of $3$ for all deterministic impartial
mechanisms.
\end{abstract}

\section{Introduction}\label{sec:introduction}

We study the problem that arises in a community of individuals that
want to select a community member to receive an award. This is
a standard social choice \cite{comsoc-book} problem, that is typically
encountered in scientific and sports communities but has also found
important applications in distributed multi-agent systems. To give an
entertaining example, the award for the player of the
year\footnote{Some basic information for the award can be found in \url{https://en.wikipedia.org/wiki/PFA_Players\%27_Player_of_the_Year}}
by the Professional Footballers Association (PFA) is decided by the
members of PFA themselves; each PFA member votes for the two players they
consider the best for the award and the player with the maximum
number of votes receives the award. Footballers consider it as one of the
most prestigious awards, due to the fact that it is decided by their
opponents. In distributed multi-agent systems, {\em leader election}
(e.g., see \cite{attiya-book}) can be thought of as a selection
problem of similar flavor.  
Other notable examples include (see \cite{fischer2015}) the selection of a representative in a group,	funding decisions based on peer reviewing or even (see \cite{alon11}) finding the most popular user of a social network.

The input of the problem can be represented as a directed graph, which
we usually call nomination profile. Each vertex represents an
individual and a directed edge indicates a vote (or nomination) by a
community member to another. 
A {\em selection mechanism} (or {\em selection rule}) takes a
nomination profile as input and returns a single vertex as the
winner. Clearly, there is a highly desirable selection rule: the one
which always returns the highest in-degree vertex as the
winner. Unfortunately, such a rule suffers from a drawback that is
pervasive in social choice. Namely, it is {\em susceptible to manipulation}.

In particular, the important constraint that makes the selection
challenging is {\em impartiality}. As every individual has a personal
interest to receive the award, selection rules should take the
individual votes into account but in such a way that no single
individual can increase her chance of winning by changing her
vote. The problem, known as {\em impartial selection}, was introduced
independently by Holzman and Moulin~\cite{moulin13} and Alon et
al.~\cite{alon11}. Unfortunately, the ideal selection rule mentioned
above is not impartial. Consider the case with a few individuals that
are tied with the highest number of votes. The agents 
involved in the tie might be tempted to lie about their true 
preferences to break the tie in their favor.\footnote{ As an illustrative example, consider a rule which assigns the prize to the maximum in-degree vertex, and if a tie exists, is resolved by some arbitrary tie breaking rule. It is not hard to see that any such rule cannot be impartial. Indeed, consider the case where two vertices, $a$ and $b$ have maximum in-degree, they both vote for each other and let $a$ be the winner according to the tie breaking rule. Observe that $b$ has incentive to decrease $a$'s in-degree, by removing its outgoing edge towards $a$, and become the sole maximum in-degree vertex, and thus the winner.}  

Impartial selection rules may inevitably select as the winner a vertex
that does not have the maximum in-degree. Holzman and Moulin~\cite{moulin13} considered minimum axiomatic properties that
impartial selection rules should satisfy. For example, a highly
desirable property, called negative unanimity, requires that an
individual with no votes at all, should never be selected. Alon et
al.~\cite{alon11} quantified the efficiency loss with the notion of
approximation ratio, defined as the worst-case ratio of the maximum
vertex in-degree over the in-degree of the vertex which is selected by
the rule. According to their definition, an impartial selection rule
should have as low approximation ratio as possible. This line of
research was concluded by the work of Fischer and
Klimm~\cite{fischer2015} who proposed impartial mechanisms
with the optimal approximation ratio of $2$.

It was pointed out in \cite{alon11,fischer2015}, that the most
challenging nomination profiles for both deterministic and randomized
mechanisms are those with small in-degrees. In the case of
deterministic mechanisms, the situation is quite extreme as all
deterministic mechanisms can be easily seen to have an unbounded
approximation ratio on inputs with a maximum in-degree of $1$ for a
single vertex and $0$ for all others; see~\cite{alon11} for a concrete
example. As a result, the approximation ratio does not seem to be an
appropriate measure to classify deterministic selection mechanisms.
Finally, Bousquet et al.~\cite{bousquet2014} have shown that if the
maximum in-degree is large enough, randomized mechanisms that return a
near optimal impartial winner do exist.

We deviate from previous work and instead propose to use {\em additive
approximation} as a measure of the quality of impartial selection
rules. Additive approximation is defined using the {\em difference}
between the maximum in-degree and the in-degree of the winner returned
by the selection mechanism. Note that deterministic mechanisms with
low additive approximation always return the highest in-degree vertex
as the winner when her margin of victory is large. When this does
not happen, we have a guarantee that the winner returned by the
mechanism has a close-to-maximum in-degree. 

\smallskip 

\noindent {\em Our contribution.} \label{subsec:contribution}
We provide positive and negative results for
impartial selection mechanisms with additive approximation
guarantees. We distinguish between two models. In the first model,
which was considered by Holzman and Moulin~\cite{moulin13}, nomination
profiles consist only of graphs with all vertices having an out-degree
of $1$. The second model is more general and allows for multiple
nominations and abstentions (hence, vertices have arbitrary
out-degrees).

As positive results, we present two randomized impartial mechanisms
which have additive approximation guarantees of $\Theta(\sqrt{n})$ and
$\Theta(n^{2/3}\ln^{1/3}n)$ for the single nomination and multiple
nomination models, respectively. Notice that both these additive
guarantees are $o(n)$ functions of the number $n$ of vertices. We
remark that an $o(n)$-additive approximation guarantee can be
translated to an $1-\epsilon$ multiplicative guarantee for graphs with
sufficiently large maximum in-degree, similar to the results of
\cite{bousquet2014}. Conversely, the multiplicative guarantees of
\cite{bousquet2014} can be translated to an $O(n^{8/9})$-additive
guarantee\footnote{The authors in \cite{bousquet2014} do not provide
	additive guarantees, hence we based our calculations on their
	provided bounds on the multiplicative guarantee $1-\epsilon$. It is
	important to note however that they claim that they have
	not optimized their parameters, so it is possible that this guarantee
	can be further reduced by a tighter analysis.}.  This analysis
further demonstrates that additive guarantees allow for a more
smooth classification of mechanisms that achieve good multiplicative
approximation in the limit.

Our mechanisms first select a small sample of vertices, and then
select the winner among the vertices that are nominated by the sample
vertices. These mechanisms are randomized variants of a class of
mechanisms which we define and call {\em strong sample
mechanisms}. Strong sample mechanisms are impartial
mechanisms which select the winner among the vertices nominated by a
sample set of vertices. In addition, they have the characteristic that
the sample set does not change with changes in the nominations of the
vertices belonging to it. For the single nomination model, we provide
a characterization, and we show that all deterministic strong sample mechanisms should use a
fixed sample set that does not depend on the nomination profile. This
yields a $n-2$ lower bound on the additive approximation guarantee of
any deterministic strong sample mechanism. For their randomized variants,
where the sample set is selected randomly, we present an
$\Omega(\sqrt{n})$ lower bound which shows that our first
randomized impartial mechanism is best possible among all randomized
variants of strong sample mechanisms. Finally, for the most general,
multiple nomination model, we present a lower bound of $3$ for all
deterministic mechanisms.

\smallskip

\noindent {\em Related work.} 
Besides the papers by Holzman and Moulin~\cite{moulin13} and Alon et
al.~\cite{alon11}, which introduced impartial selection as we study it
here, de Clippel et al.~\cite{declipper2008} considered a different
version with a divisible award. Alon et al.~\cite{alon11} used the
approximation ratio as a measure of quality for impartial selection
mechanisms. After realizing that no deterministic mechanism achieves a
bounded approximation ratio, they focused on randomized mechanisms and
proposed the \textsc{$2$-Partition} mechanism, which guarantees an
approximation ratio of $4$ and complemented this positive result with a
lower bound of $2$ for randomized mechanisms.

Later, Fischer and Klimm were able to design a mechanism that achieves an approximation ratio of $2$, by generalizing \textsc{$2$-partition}. Their optimal mechanism, called \textsc{Permutation}, examines the vertices sequentially following their order in a random permutation and selects as the winner the vertex of highest degree counting only edges with direction from ``left'' to ``right.'' They also provided lower bounds on the approximation ratio for restricted inputs (e.g., with no abstentions) and have shown that the worst case examples for the approximation ratio are tight when the input nomination profiles are small.

Bousquet et al.~\cite{bousquet2014} noticed this bias towards
instances with small in-degrees and examined the problem for instances
of very high maximum in-degree. After showing that
\textsc{Permutation} performs significantly better for instances of
high in-degree, they have designed the \textsc{Slicing} mechanism with
near optimal asymptotic behaviour for that restricted family of
graphs. More precisely, they have shown that, if the maximum in-degree
is large enough, \textsc{Slicing} can guarantee that the winner's
in-degree approximates the maximum in-degree by a small error. As we
discussed in the previous section, the \textsc{Slicing} mechanism can
achieve an additive guarantee of $O(n^{8/9})$.

Holzman and Moulin~\cite{moulin13} explored impartial mechanisms through an axiomatic approach. They focused on the single nomination model and proposed several deterministic
mechanisms, including the \textsc{Majority with Default}
mechanism. \textsc{Majority with Default} defines a vertex as a default winner
and examines if there is any vertex with in-degree more than $\lceil
n/2\rceil$, ignoring the outgoing edge from the default winner. If such a vertex
exists, then this is the winner; otherwise the default vertex
wins. While this mechanism has the unpleasant property that the
default vertex may become the winner with no incoming edges at all, its additive
approximation is at most $\lceil n/2\rceil$.  Further to that, they
came up with a fundamental limitation of the problem: no impartial
selection mechanism can be simultaneously negative and positive unanimous (i.e., never selecting as a winner a vertex of in-degree $0$ and always selecting the vertex of in-degree $n-1$, whenever there exists one).

Mackenzie in \cite{mackenzie2015} characterized symmetric (i.e., name-independent) rules in the single nomination model. Tamura and Ohseto \cite{tamura2014impartial} observed that when the demand for only one winner is relaxed, then impartial, negative unanimous and positive unanimous mechanisms do exist. Later on, Tamura \cite{tamura2016characterizing} characterized them. On the same agenda, Bjelde et al. in \cite{bjelde2017} proposed a deterministic version of the permutation mechanism that achieves the $1/2$ bound by allowing at most two winners. Alon et al.~\cite{alon11} also present results for selecting multiple winners. 

Finally, we remark that impartiality has been investigated as a desired property in other contexts where strategic behaviour occurs. Recent examples include peer reviewing~\cite{aziz2016strategyproof,kahng2018ranking,Kurokawa2015}, selecting impartially the most influential vertex in a network~\cite{babichenko2018incentive} and in linear regression algorithms as a means to tackle strategic noise \cite{chen2018}.

\section{Preliminaries}\label{sec:preliminaries}

Let $N=\{1,...,n\}$ be the set of $n\geq 2$ agents. A \emph{nomination graph} $G=(N,E)$ is a directed graph with vertices representing the agents. The set of outgoing edges from each vertex represents the
nominations of each agent; it contains no self-loops (as, agents are not allowed to nominate themselves) and can be empty (as an agent is, in general, allowed to abstain). We
write $\mathcal{G}=\mathcal{G}_n$ for the set of all graphs with $n$
vertices and no self-loops. We also use the notation
$\mathcal{G}^1=\mathcal{G}^1_n$ to denote the subset of $\mathcal{G}$
with out-degree exactly $1$. For convenience in the proofs, we sometimes
denote each graph $G$ by a tuple $\x$, called \emph{nomination profile}, where $x_{u}$ denotes the set of outgoing edges of vertex $u$ in $G$.  For $u \in N$, we use the notation
$\x_{-u}$ to denote the graph $(N,E\setminus (\{u\} \times
N))$ and, for the set of vertices $U\subseteq N$, we use $\x_{-U}$ to denote the graph
$(N,E\setminus (U \times N))$. We use the terms nomination graphs and nomination profiles interchangeably.

The notation $\delta_{S}(u,\x)$ refers to the in-degree of
vertex $u$ in the graph $\x$ taking into account only edges that
originate from the subset $S\subseteq N$. When $S=N$, we use the
shorthand $\delta(u,\x)$ and when $S=\{v\}$ we use the simplified notation $\delta_v(u,\x)$.
If the graph is clearly identified by the
context we omit $\x$ too, using $\delta(u)$. We denote the maximum in-degree of graph $\x$ as $\Delta(\x)= \max_{u \in N}
\delta(u,\x)$ and, whenever $\x$ is clear from the context, we use
$\Delta$ instead.

A \emph{selection mechanism} for a set of graphs $\mathcal{G}' \subseteq
\mathcal{G}$, is a function $f: \mathcal{G'} \rightarrow [0,1]^{n+1}$,
mapping each graph of $\mathcal{G}'$ to a probability distribution
over all vertices (which can be potential winners) as well as to the possibility of returning no winner at all. A selection mechanism is {\em deterministic} in the
special case where for all $\x$, $(f(\x))_u\in \{0,1\}$ for all vertices $u \in N$.

A selection mechanism is \emph{impartial} if for all graphs $\x \in
\mathcal{G}'$, all possible sets $x'_u$ of outgoing edges (from vertex $u$),  it holds $(f(\x))_u=(f(x'_u,\x_{-u}))_u$ for every vertex $u$. In words,
the probability that $u$ wins must be independent of the set of its
outgoing edges.

Let  $\Exp{\delta(f(\x))}$ denote the expected in-degree of $f$ on $\x$, i.e. $\Exp{\delta (f(\x)) }=\sum_{u \in N} (f(\x))_u \delta(u,\x) $. A selection mechanism $f$ is called  $\alpha(n)$-additive if $$\max_{\x \in \mathcal{G}_n} \left\{\Delta(\x) - \Exp{\delta(f(\x))}\right\} \leq \alpha(n),$$ for every $n\in \mathbb{N}$.

\section{Upper Bounds}\label{sec: upperbounds}
In this section we provide randomized selection mechanisms for the two
best studied models in the literature. First, in
Section~\ref{sec:random-upper} we propose a mechanism for the single nomination model of
Holzman and Moulin~\cite{moulin13}, where nomination profiles consist
only of graphs with all vertices having an out-degree of $1$. Then, in
Section~\ref{sec:general-upper-bound} we provide a mechanism for the
more general model studied by Alon et al.~\cite{alon11}, which allows
for multiple nominations and abstentions.

\subsection{The {\sc Sample~and~Vote} Mechanism}\label{sec:random-upper}
Our first mechanism, {\sc Sample~and~Vote}, forms a sample $S$ of
vertices by repeating $k$ times the selection of a vertex uniformly at
random with replacement\footnote{Sampling uniformly at random with replacement allows for a simple analysis of the mechanism. In section \ref{sec:lower-bound}, we show that this is indeed a good choice, as no other sampling method yields better additive approximation.}. Any vertex that is selected at least once
belongs to the sample $S$. Let $W:=\{u\in N\setminus
S:\delta_S(u,\x)\geq 1\}$ be the set of vertices outside $S$ that are
nominated by the vertices of $S$. If $W=\emptyset$, no winner is
returned. Otherwise, the winner is a vertex in $\argmax_{u\in
	W}{\delta_{N\setminus W}(u,\x)}$. We note here the crucial fact that the selection of the sample set $S$ is independent of the profile $\x$.

Impartiality follows since a vertex that does not belong to $W$ (no
matter if it belongs to $S$ or not) cannot become the winner and the
nominations of vertices in $W$ are not taken into account for deciding
the winner among them. We now argue that, for a carefully selected $k$,
this mechanism also achieves a good additive guarantee.

\begin{figure*}[ht]
	\centering
	\subfloat[{\sc Sample~and~Vote}]{\label{fig:UB:SampleAndVote}
		\begin{tikzpicture}[winner/.style={circle,draw=black!80,fill=black!80!,very thick,minimum size=0.8cm,text=white},simple/.style={very thick,circle,draw=black!80,minimum size=0.8cm},sample/.style={circle, very thick,draw=black!80!red,fill=black!30!,minimum size=0.8cm, dashed},cand/.style={circle,draw=red!80,fill=black!10,minimum size=0.8cm,very thick}]
		
		\fill[black!20!white,rotate=45] (1,0) ellipse (2 cm and 0.6 cm);
		\node[simple] at (0,0) (c) {$5$};
		\node[winner] at  (45:2) (a1) {$4$};
		\node[simple] at  (75:3.5) (aa1) {$6$};
		\node[simple] at  (55:3.5) (aa2) {$7$};
		\node[simple] at  (35:3.5) (aa3) {$8$};
		\node[sample] at  (90:2) (a2) {$3$};
		\node[sample] at  (90+45:2) (a3) {$2$};
		\node[simple] at  (180:2) (a4) {$1$};
				
		\draw[-Latex,-Latex,thick ] (aa3) to (a1);
		\draw[-Latex,-Latex,thick ] (aa2) to (a1);
		\draw[-Latex,-Latex,thick ] (aa1) to (a1);
		\draw[-Latex,dotted ] (a1) to (c);
		\draw[-Latex,,-Latex,thick ] (a2) to (c);
		\draw[-Latex, ,-Latex,thick] (a3) to (c);
		\draw[-Latex, bend left,,-Latex,thick] (a4) to (c);
		\draw[-Latex , bend left,dotted] (c) to (a4);
		
		\begin{scope}[xshift=100]
		\node[sample] at  (0:0) (w) {$12$}; 
		\node[simple] at  (90-45:-2) (y) {$10$}; 
		\node[simple] at  (0:-2) (x) {$9$}; 
		\node[simple] at  (90:-2) (z) {$11$}; 
		
		\draw[-Latex,dotted] (x) to (w);
		\draw[-Latex,dotted ] (y) to (w);
		\draw[-Latex,dotted ] (z) to (w);
		\draw[-Latex,-Latex,thick ] (w) to (a1);
		\end{scope}
		\end{tikzpicture}
	}\hfill
	\subfloat[{\sc Sample~and~Poll}]{\label{fig:UB:SampleAndPoll}
		\centering
		\begin{tikzpicture}[winner/.style={circle,draw=black!80,fill=black!80!,very thick,minimum size=0.8cm,text=white},simple/.style={very thick,circle,draw=black!80,minimum size=0.8cm},sample/.style={circle, very thick,draw=black!80!red,fill=black!30!,minimum size=0.8cm, dashed},cand/.style={circle,draw=red!80,fill=black!10,minimum size=0.8cm,very thick}]
		
		\node[winner] at (0,0) (c) {$5$};
		\node[simple] at  (45:2) (a1) {$4$};
		\node[simple] at  (75:3.5) (aa1) {$6$};
		\node[simple] at  (55:3.5) (aa2) {$7$};
		\node[simple] at  (35:3.5) (aa3) {$8$};
		\node[sample] at  (90:2) (a2) {$3$};
		\node[sample] at  (90+45:2) (a3) {$2$};
		\node[simple] at  (180:2) (a4) {$1$};
				
		\draw[-Latex,-Latex,dotted ] (aa3) to (a1);	
		\draw[-Latex,-Latex,dotted ] (aa2) to (a1);
		\draw[-Latex,-Latex,dotted ] (aa1) to (a1);
		\draw[-Latex,dotted ] (a1) to (c);
		\draw[-Latex,,-Latex,thick ] (a2) to (c);
		\draw[-Latex, ,-Latex,thick] (a3) to (c);
		\draw[-Latex, bend left,-Latex,dotted] (a4) to (c);
		\draw[-Latex , bend left,dotted] (c) to (a4);
		
		\begin{scope}[xshift=100]
		\node[sample] at  (0:0) (w) {$12$}; 
		\node[simple] at  (90-45:-2) (y) {$10$}; 
		\node[simple] at  (0:-2) (x) {$9$}; 
		\node[simple] at  (90:-2) (z) {$11$}; 
		
		\draw[-Latex,dotted] (x) to (w);
		\draw[-Latex,dotted ] (y) to (w);
		\draw[-Latex,dotted ] (z) to (w);
		\draw[-Latex,-Latex,thick ] (w) to (a1);
		\end{scope}
		\end{tikzpicture}
	}
	\caption{Examples for {\sc Sample~and~Vote} and {\sc Sample~and~Poll}, with sample size $k=3$ and $n=12$. In both cases we use the same sample set $S=\{2,3,12\}$. For {\sc Sample and Vote}, the vertices in $S$ define the set $W=\{4,5\}$ of possible winners. The winner is then the vertex with maximum in-degree from the votes from $N\setminus W$ to $W$ (the solid drawn edges in the figure).  
	For {\sc Sample~and~Poll}, the sample set $S$ immediately declares the winner, as one of the maximum in-degree vertices from edges starting in $S$, while the edges from vertices in $N\setminus S$ are completely ignored. In both cases, the dark vertex is the winner and the light dashed-lined vertices belong to the sample set $S$. Also, all edges drawn with a dotted line are ignored by the mechanism. The shaded area in figure \ref{fig:UB:SampleAndVote} shows which vertices belong in set $W$.
	}
	\label{fig:SampleExamples}
\end{figure*}

\begin{theorem} 
	For $k=\Theta(\sqrt{n})$, the {\sc Sample~and~Vote} mechanism is impartial and $\Theta(\sqrt{n})$-additive in the single nomination model.
\end{theorem}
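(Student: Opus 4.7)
The plan splits into the immediate impartiality check and the quantitative analysis of the additive loss. For impartiality, I would simply record the observations made in the paragraph preceding the theorem: the sample $S$ is drawn independently of $\x$, vertices in $S$ cannot win, and the tiebreaker $\delta_{N\setminus W}$ only uses edges from $N\setminus W$ to $W$, so each candidate's own outgoing edge is ignored. No further work is needed for this part.

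For the additive bound I would fix a profile $\x$, let $u^*$ be a vertex with $\delta(u^*,\x)=\Delta$, and write $V^*=\{v\in N:x_v=u^*\}$, so $|V^*|=\Delta$. The event to track is $\mathcal{E}=\{u^*\notin S\}\cap\{S\cap V^*\neq\emptyset\}$, equivalently $\{u^*\in W\}$. The key structural observation is that on $\mathcal{E}$ the returned winner $w$ satisfies
\[
\delta_{N\setminus W}(w,\x)\geq \delta_{N\setminus W}(u^*,\x)\geq \Delta-|W|,
\]
and in the single nomination model every vertex has out-degree exactly $1$, so $|W|\leq |S|\leq k$. Consequently $\delta(w,\x)\geq \Delta-k$ whenever $\mathcal{E}$ occurs, isolating the two sources of loss: at most $k$ on $\mathcal{E}$ and at most $\Delta$ on $\mathcal{E}^c$.

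The second step is a union bound $\Pr{\mathcal{E}^c}\leq \Pr{u^*\in S}+\Pr{S\cap V^*=\emptyset}\leq k/n + e^{-k\Delta/n}$, which combined with the preceding paragraph yields
\[
\Delta-\Exp{\delta(f(\x))}\leq k+\Delta\,\Pr{\mathcal{E}^c}\leq 2k+\Delta\,e^{-k\Delta/n}.
\]
Using $\max_{y\geq 0} ye^{-y}=1/e$ with $y=k\Delta/n$ bounds $\Delta\,e^{-k\Delta/n}\leq n/(ek)$, so the additive loss is at most $2k+n/(ek)$, which I would optimize at $k=\Theta(\sqrt{n})$ to obtain the $O(\sqrt{n})$ upper bound. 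The main obstacle is exactly this quantitative tradeoff: enlarging $k$ shrinks $\Pr{\mathcal{E}^c}$ but mechanically enlarges $|W|$, and a careless argument that ignores the coupling would predict a much better bound than is attainable; the calculation above verifies that both effects balance at $k=\Theta(\sqrt{n})$.

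For the matching lower direction of $\Theta$, I would exhibit a profile in which $u^*$ is nominated by exactly $\lfloor\sqrt{n}\rfloor$ vertices while every other vertex has in-degree at most $1$; for $k=\Theta(\sqrt{n})$ the event $S\cap V^*=\emptyset$ has constant probability, in which case the winner has in-degree $O(1)$ and the expected additive loss is $\Omega(\sqrt{n})$, confirming that the analysis is tight for {\sc Random $k$-sample}.
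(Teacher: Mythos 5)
Your proposal is correct and follows essentially the same route as the paper: the identical key lemma that $u^*\in W$ forces the winner's in-degree to be at least $\Delta-k$ (via $|W|\leq|S|\leq k$ in the single nomination model), followed by a bound on the probability of the complementary event and the same $k$-versus-$n/k$ tradeoff at $k=\Theta(\sqrt{n})$. The only differences are cosmetic --- you bound $\Pr{u^*\notin W}$ by a union bound and $1-x\leq e^{-x}$ where the paper computes $\Pr{u^*\in W}$ exactly and then applies Bernoulli's inequality --- plus a tightness example for this specific mechanism that the paper defers to its $\Omega(\sqrt{n})$ lower bound for strong sample mechanisms.
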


\begin{proof}
	Consider a nomination graph and let $u^*$ be a vertex of maximum
	in-degree $\Delta$. In our proof of the approximation guarantee, we
	will use the following two technical lemmas.
	\begin{lemma}\label{claim:maxdegree-k}
		If $u^*\in W$, then the winner has in-degree at least $\Delta-k$.
	\end{lemma}
	
	\begin{proof}
		This is clearly true if the winner returned by {\sc Sample~and~Vote} is $u^*$. Otherwise, the winner $w$ satisfies
		\begin{align*}
		\delta(w,\x) &\geq \delta_{N\setminus W}(w,\x) \geq \delta_{N\setminus W}(u^*,\x)=\delta(u^*,\x)-\delta_W(u^*,\x)\geq \Delta-k.
		\end{align*}

		The first inequality is trivial. The second inequality follows by the
		definition of the winner $w$. The third inequality follows since $W$
		is created by nominations of vertices in $S$, taking into account that
		each vertex has out-degree exactly 1. Hence, $\delta_W(u^*,\x)\leq
		|W|\leq |S|\leq k$.
	\end{proof}
	
	\begin{lemma}\label{claim:ustar-in-W}
		The probability that $u^*$ belongs to the nominated set $W$ is 
		\[\Pr{u^*\in W} = \left(1-\left(1-\frac{\Delta}{n-1}\right)^k\right)\left(1-\frac{1}{n}\right)^k.\]
	\end{lemma}
	
	\begin{proof}
		Indeed, $u^*$ belongs to $W$ if it does not belong to the sample $S$
		and instead some of the $\Delta$ vertices that nominate $u^*$ is
		picked in some of the $k$ vertex selections. The probability that
		$u^*$ is not in the sample is
		\begin{align}\label{eq:prob-ustar-not-in-S}
		\Pr{u^*\not\in S} &=\left(1-\frac{1}{n}\right)^k,
		\end{align}
		i.e., the probability that vertex $u^*$ is not picked in some of the
		$k$ vertex selections. Observe that the probability that some of the
		$\Delta$ vertices that nominate $u^*$ is picked in a vertex selection
		step assuming that $u^*$ is never selected is
		$\frac{\Delta}{n-1}$. Hence, the probability that some of the $\Delta$
		vertices nominating $u^*$ is in the sample assuming that $u^*\not\in
		S$ is
		\begin{align}\label{eq:ustar-in-W-conditional}
		\Pr{\delta_S(u^*,\x)\geq 1|u^*\not\in S} & =1-\left(1-\frac{\Delta}{n-1}\right)^k.
		\end{align}
		The lemma follows by the chain rule
		\begin{align*}
		\Pr{u^*\in W}&=\Pr{u^*\not=S \land \delta_S(u^*,\x)\geq 1}\\
		&=\Pr{\delta_S(u^*,\x)\geq 1|u^*\not\in S}\cdot \Pr{u^*\not\in S}
		\end{align*}
		and equations (\ref{eq:prob-ustar-not-in-S}) and (\ref{eq:ustar-in-W-conditional}). 
	\end{proof}
	
	By Lemmas~\ref{claim:maxdegree-k} and~\ref{claim:ustar-in-W}, we have
	that the expected degree of the winner returned by mechanism {\sc Sample~and~Vote} is
	\begin{align*}
	\Exp{\delta(w,\x)} &\geq \Pr{u^*\in W}\cdot (\Delta-k) \nonumber \\ &= \left(1-\left(1-\frac{\Delta}{n-1}\right)^k\right)\left(1-\frac{1}{n}\right)^k (\Delta-k)\\
	&\geq \left(1-\left(1-\frac{\Delta}{n-1}\right)^k\right)\left(1-\frac{k}{n}\right) (\Delta-k)\nonumber \\&>\left(1-\left(1-\frac{\Delta}{n-1}\right)^k\right)\left(\Delta-2k\right)\\
	&=
	\Delta-2k-\left(1-\frac{\Delta}{n-1}\right)^k\left(\Delta-2k\right)
	\end{align*}	
	The second inequality follows by Bernoulli's inequality $(1+x)^r\geq 1+rx$ for every real $x\geq -1$ and $r\geq 0$ and the third one since $n>\Delta$. Now, the quantity $\left(1-\frac{\Delta}{n-1}\right)^k\left(\Delta-2k\right)$ is maximized for $\Delta=\frac{n-1+2k^2}{k+1}$ to a value that is at most $\frac{n+1}{k+1}-2$. Hence, 
	\begin{align*}
	\Exp{\delta(w,\x)} &\geq \Delta-2(k-1)-\frac{n+1}{k+1}.
	\end{align*}
	By setting $k\in \Theta(\sqrt{n})$, we obtain that $\Exp{\delta(w,\x)}\geq \Delta - \Theta(\sqrt{n})$, as desired.
\end{proof}

%%%%

\subsection{The {\sc Sample~and~Poll} Mechanism}
\label{sec:general-upper-bound}
In the most general model, we propose the randomized mechanism {\sc Sample~and~Poll}, which is even
simpler than {\sc Sample~and~Vote}. {\sc Sample~and~Poll} forms a sample $S$ of vertices by repeating $k$ times the
selection of a vertex uniformly at random with replacement. The winner
(if any) is a vertex $w$ in $\argmax_{u\in
	{N\setminus S}}{\delta_S(u,\x)}$. We remark that, for technical reasons, we allow $S$ to be a multi-set if the same vertex is selected more than once. Then, edge multiplicities are counted in $\delta_S(u,\x)$.  Clearly, {\sc Sample~and~Poll} is impartial. The winner is decided  by the vertices in $S$, which in turn have no chance to become winners\footnote{Note that the main differences between {\sc Sample~and~Vote} and {\sc Sample~and~Poll} is (i) that the former utilizes the nominations from vertices in $N \setminus(S \cup W)$, while the latter does not, and (ii) that {\sc Sample~and~Vote} counts the nominations of vertices in the sample set $S$ only once.}. Our approximation guarantee is slightly weaker now.

\begin{theorem}
	The {\sc Sample~and~Poll} mechanism is impartial and $\Theta(n^{2/3} \ln^{1/3}n)$-additive, when $k=\left\lceil 4^{1/3} n^{2/3} \ln^{1/3}n \right\rceil$.
\end{theorem}

\begin{proof}
	Let $u^*$ be a vertex of maximum in-degree $\Delta$. If $\Delta\leq k$, {\sc Sample~and~Poll} is clearly $\Theta(n^{2/3} \ln^{1/3}n)$-additive. So, in the following, we assume that $\Delta >k$. Let $C$ be the set of vertices of in-degree at most $\Delta-k-1$. We first show that the probability $\Pr{\delta(w,\x)\leq \Delta-k-1}$ that some vertex of $C$ is returned as the winner by {\sc Sample~and~Poll} is small.
	
	Notice that if one of the vertices of $C$ is the winner, then either vertex $u^*$ belongs to to the sample set $S$ or it does not belongs to $S$ but it gets the same or fewer nominations compared to some vertex $u$ of $C$. Hence,
	
	\begin{align}\nonumber
	&\Pr{\delta(w,\x) \leq \Delta-k-1}\\\nonumber 
	&\leq \Pr{u^*\in S}+\Pr{u^*\not\in S \land \delta_S(u^*,\x)\leq \delta_S(u,\x) \mbox{ for some $u\in C$ s.t.~$u\not\in S$}}\\\nonumber
	&\leq\Pr{u^*\in S}+\sum_{u\in C}{\Pr{u^*\not\in S \land u\not\in S \land \delta_S(u^*,\x)\leq \delta_S(u,\x)}}\\\label{eq:sum-of-probs}
	&=\Pr{u^*\in S}+\sum_{u\in C}{\Pr{u^*, u\not\in S}  \Pr{\delta_S(u^*,\x)\leq \delta_S(u,\x)|u^*, u\not\in S}}
	\end{align}
	We will now bound the rightmost probability in (\ref{eq:sum-of-probs}). 
	
	\begin{clm}\label{claim:use-hoeffding}
		For every $u\in C$, $\Pr{\delta_S(u^*,\x)\leq \delta_S(u,\x)|u^*\not\in S, u\not\in S} \leq \exp\left(-\frac{k^3}{2n^2}\right).$
	\end{clm}
	
	\begin{proof}
		Assuming that $u^*$ and $u$ do not belong to the sample set $S$, we will express the difference $\delta_S(u^*,\x)-\delta_S(u,\x)$ as the sum of independent random variables $Y_i$ for $i=1, ..., k$. Variable $Y_i$ indicates the contribution of the $i$-th vertex selection to the difference $\delta_S(u^*,\x)-\delta_S(u,\x)$. In particular, $Y_i$ is equal to $1$, $-1$, and $0$ if the outgoing edges of the $i$-th vertex selected in the sample set points to vertex $u^*$ but not to vertex $u$, to vertex $u$ but not to vertex $u^*$, and either to none or to both of them, respectively. Hence, $\delta_S(u^*,\x)-\delta_S(u,\x)=\sum_{i=1}^k{Y_i}$ with $Y_i\in \{-1,0,1\}$ and 
		\begin{align*}
		\Exp{\delta_S(u^*,\x)-\delta_S(u,\x)|u^*,u\not\in S} &= \left(\Delta-\delta_u(u^*,\x)-\delta(u,\x)+\delta_{u^*}(u,\x)\right) \frac{k}{n-2} \\ &\geq \frac{k^2}{n}.
		\end{align*}
		Notice that for the computation of the expectation, we have used the facts that $\Delta-\delta_u(u^*,\x)$ vertices besides $u$ have outgoing edges pointing to $u^*$, $\delta(u,\x)-\delta_{u^*}(u,\x)$ vertices besides $u^*$ have outgoing edges pointing to $u$, and each of them is included in the sample set with probability $\frac{k}{n-2}$. The inequality follows since $\delta(u,\x)\leq \Delta-k-1$ and $\delta_u(u^*,\x), \delta_{u^*}(u,\x)\in \{0,1\}$.
		
		\noindent We will now apply Hoeffding's bound, which is stated as follows.
		
		\begin{lemma}[Hoeffding~\cite{H63}]\label{lem:hoeffding}
			Let $X_1, X_2, ..., X_t$ be independent random variables so that $\Pr{a_j\leq X_j \leq b_j} =1$. Then, the expectation of the random variable $X=\sum_{j=1}^t{X_j}$ is $\mathbb{E}[X]=\sum_{j=1}^t{\mathbb{E}[X_j]}$ and, furthermore, for every $\nu\geq 0$, $$\Pr{X \leq \mathbb{E}[X]- \nu}\leq \exp\left(-\frac{2\nu^2}{\sum_{j=1}^t{(b_j-a_j)^2}}\right).$$
		\end{lemma}
		In particular, we apply Lemma~\ref{lem:hoeffding} on the random variable $X=\delta_S(u^*,\x)-\delta_S(u,\x)$ (assuming that $u^*,u\not\in S$). Note that $t=k$, $a_j=-1$ and $b_j=1$, and recall that $\Exp{X}\geq \frac{k^2}{n}$. We obtain
		\begin{align*}
		\Pr{\delta_S(u^*,\x)-\delta_S(u,\x) \leq 0|u^*,u\not\in S} &= \Pr{X\leq 0} \\ &\leq \left(-\frac{\Exp{X}^2}{2k}\right)
		\leq \exp\left(-\frac{k^3}{2n^2}\right),
		\end{align*}
		as desired.
	\end{proof}	
	
	\noindent Using the definition of $\Exp{\delta(w,\x)}$, inequality (\ref{eq:sum-of-probs}), and Claim~\ref{claim:use-hoeffding}, we obtain
	\begin{align}\nonumber
	&\Exp{\delta(w,\x)} \geq (\Delta-k) \cdot \left(1-\Pr{\delta(w,\x)\leq \Delta-k-1}\right)\\\nonumber
	&\geq (\Delta-k)\Pr{u^*\not\in S}  \\ \nonumber &- (\Delta-k)\left(\sum_{u\in C}{\Pr{u^*, u\not\in S} \cdot \Pr{\delta_S(u^*,\x)\leq \delta_S(u,\x)|u^*, u\not\in S}}\right) \\\nonumber
	&\geq (\Delta-k) \left( 1-\frac{1}{n}\right)^k- (\Delta-k)\left( \sum_{u\in C}{\left(1-\frac{2}{n}\right)^k\cdot \exp\left(-\frac{k^3}{2n^2}\right)} \right)\\\nonumber
	&\geq (\Delta-k)\left(1-\frac{k}{n}\right) - (\Delta-k) \cdot n\cdot \exp\left(-\frac{k^3}{2n^2}\right)\\\label{eq:final-expr}
	&\geq \Delta - 2k -n^2 \cdot \exp\left(-\frac{k^3}{2n^2}\right).
	\end{align}
	The last inequality follows since $n\geq \Delta$. Setting $k=\left\lceil 4^{1/3} n^{2/3} \ln^{1/3}n \right\rceil$, (\ref{eq:final-expr}) yields $\Exp{\delta(w,\x)}\geq  \Delta-\Theta\left(n^{2/3} \ln^{1/3}n\right)$, as desired.
\end{proof}

\section{Lower Bounds}
In this section we complement our positive results by providing
impossibility results. First, in
Section~\ref{sec:lower-bound}, we provide lower bounds for a class of
mechanisms which we call strong sample mechanisms, in the single nomination model of
Holzman and Moulin~\cite{moulin13}. 
Then, in
Section~\ref{sec:general-lower-bound}, we provide a lower bound for the most general model of Alon et al.~\cite{alon11}, which applies to any deterministic mechanism.

\subsection{Strong Sample Mechanisms}\label{sec:lower-bound}

In this section, we give a characterization theorem for a class of
impartial mechanisms which we call {\em strong sample} impartial mechanisms. We
then use this characterization to provide lower bounds on the additive
approximation of deterministic and randomized mechanisms that belong
to this class. Our results suggest that the {\sc Sample and Vote} mechanism from Section~\ref{sec:random-upper} is essentially the best possible randomized mechanism in this class.

For a graph $G \in \mathcal{G}$ and a subset of vertices $S$, let $W:=W_S(G)$ be the
set of vertices outside $S$ nominated by $S$, i.e. $W=\{w\in N\setminus S:
(v,w)\in E, v\in S\}$. Then, a deterministic {\em sample} mechanism\footnote{{For simplicity we use the notation $(g,f)$ rather the more precise $(g,f(g))$. }} $(g,f)$ firstly selects
a subset $S$ using some \emph{sample function} $g: \mathcal{G} \rightarrow
2^N$, and then applies a (possibly randomized) selection mechanism $f$ by restricting its range on vertices
in $W$; notice that if $W=\emptyset$, $f$ does not select any vertex. 

This definition allows for a large class of mechanisms. For
example, the special case of sample mechanisms with $|S|=1$ (in which,
the winner has in-degree at least $1$), coincides with all negative unanimous mechanisms defined by Holzman and
Moulin~\cite{moulin13}. Indeed, when $|S|=1$, the set $W$ in never
empty and the winner has in-degree at least $1$. This is not however
the case for $|S|>1$, where $W$ could be empty when all
vertices in $S$ have outgoing edges destined for vertices in $S$ and
no winner can be declared. Characterizing all impartial sample mechanisms is an
outstanding open problem. We are able to provide a first step, by
providing a characterization for the more restricted class of impartial and strong sample mechanisms. Informally, in strong sample mechanisms,  vertices cannot
affect their chance of being selected in the sample set $S$.

\begin{definition}(Deterministic strong sample mechanisms)\label{def:strong-sample}
	We call a deterministic sample mechanism $(g,f)$ with
	sample function $g: \mathcal{G} \rightarrow 2^N$
	\emph{strong} if $g(x'_u,\x_{-u})=g(\x)$ for all $u \in g(\x)$, $x'_u \in
	N\setminus \{u\}$ and $\x \in \mathcal{G}$.
\end{definition}

The reader may observe the similarity of this definition with
impartiality (function $g$ of a strong sample mechanism satisfies
similar properties with function $f$ of an impartial selection
mechanism). The following lemma describes a straightforward, yet
useful, consequence of the above definition.

\begin{lemma}\label{LB:strongsample:plusminus}
	Let $(g,f)$ be a deterministic strong sample mechanism and let $S \subseteq
	N$. For any nomination profiles $\x,\x'$ with $\x_{-S}=\x'_{-S}$, if $S
	\setminus g(\x)\neq \emptyset$ then $S \setminus g(\x')\neq
	\emptyset$.
\end{lemma}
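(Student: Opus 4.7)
The plan is to argue by contrapositive: assuming $S \subseteq g(\x')$, we will show that $S \subseteq g(\x)$, contradicting $S\setminus g(\x)\neq\emptyset$. The key observation is that, by the definition of a strong sample mechanism, altering the outgoing edge of any vertex currently inside the sample leaves the sample itself unchanged.

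Concretely, I would enumerate $S=\{v_1,\ldots,v_m\}$ and build a sequence of profiles $\x^{(0)},\x^{(1)},\ldots,\x^{(m)}$ that interpolates between $\x'$ and $\x$ one vertex at a time. Set $\x^{(0)}:=\x'$, and obtain $\x^{(i)}$ from $\x^{(i-1)}$ by replacing the outgoing edge of $v_i$ with its value in $\x$ (recall that, in $\mathcal{G}^1$, each $x_{v_i}$ is a single vertex in $N\setminus\{v_i\}$). Since $\x_{-S}=\x'_{-S}$, after $m$ steps every vertex in $S$ has been updated and nothing outside $S$ has been touched, so $\x^{(m)}=\x$.

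The induction step is what carries the argument. Suppose inductively that $g(\x^{(i-1)})=g(\x')$, which trivially holds for $i=1$. Because $S\subseteq g(\x')=g(\x^{(i-1)})$, we have in particular that $v_i\in g(\x^{(i-1)})$, so Definition~\ref{def:strong-sample} applied to the profile $\x^{(i-1)}$ and the vertex $v_i$ gives $g(\x^{(i)})=g(\x^{(i-1)})=g(\x')$. Iterating, we conclude $g(\x)=g(\x^{(m)})=g(\x')\supseteq S$, which is the desired contradiction.

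The potential pitfall, and really the only subtlety, is that the strong sample property can only be invoked for vertices that lie inside the current sample. This is why the transformation must proceed incrementally and why the induction hypothesis must maintain the full equality $g(\x^{(i)})=g(\x')$ rather than merely $S\subseteq g(\x^{(i)})$: we need to know that each $v_i$ remains in the sample at the moment we modify its outgoing edge, and this is preserved automatically once we track the entire sample.
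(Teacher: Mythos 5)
Your proof is correct and follows essentially the same route as the paper's: assume for contradiction that $S\subseteq g(\x')$ and use the strong sample property to transfer the sample from $\x'$ to $\x$, contradicting $S\setminus g(\x)\neq\emptyset$. The only difference is that you spell out the one-vertex-at-a-time induction (maintaining $g(\x^{(i)})=g(\x')$ so each $v_i$ is still in the sample when its edge is changed), which the paper's terser argument leaves implicit.
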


\begin{proof}
	For the sake of contradiction, let us assume that $S \setminus
	g(\x')= \emptyset$, i.e., the sample vertices in $\x'$ are disjoint from
	$S$. Then, by Definition~\ref{def:strong-sample}, $g(\x)$ remains
	the same as outgoing edges from vertices in $S$ should not affect the sample
	set. But then, $S \setminus g(\x)= \emptyset$, which is a contradiction. 
\end{proof}

In the next theorem, we provide a characterization for the sample
function of deterministic impartial strong sample mechanisms in the single nomination model. The theorem
essentially states that the only possible way to choose the sample set
must be independently of the graph.

\begin{theorem}\label{LB:StrongSamplek}
	In the single nomination model, any impartial deterministic strong sample mechanism $(g,f)$
	selects the sample set independently of
	the nomination profile, i.e., for all $\x,\x' \in \mathcal{G}^1$,
	$g(\x)=g(\x')=S$.
\end{theorem}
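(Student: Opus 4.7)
The plan is a proof by contradiction via Hamming-distance minimisation. Suppose $g$ is not constant; pick $\x,\x'\in\mathcal{G}^1$ with $g(\x)\neq g(\x')$ minimizing the Hamming distance in outgoing edges. I would argue this minimum distance equals $1$: if $\x,\x'$ differ in two or more vertex edges, pick any single disagreeing vertex and consider the intermediate profile $\x''$ obtained by matching that one edge of $\x$ to $\x'$; then either $g(\x'')\neq g(\x)$ (giving a distance-$1$ witness $\x,\x''$) or $g(\x'')=g(\x)\neq g(\x')$ (giving a closer witness $\x'',\x'$), contradicting minimality. So $\x,\x'$ differ only in the outgoing edge of some vertex $v$. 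If $v\in g(\x)$ (or symmetrically $v\in g(\x')$), Definition~\ref{def:strong-sample} yields $g(\x)=g(\x')$ at once, a contradiction. Thus the principal case is $v\notin g(\x)\cup g(\x')$; set $S:=g(\x)$ and $S':=g(\x')$, so $S\neq S'$.

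In this remaining case, impartiality must be invoked. The plan is to construct two auxiliary profiles $\tilde\x,\tilde\x'$ that still differ only at $v$'s edge, still have samples $S,S'$ respectively, but whose winner structure forces $v$'s winning status to differ between them --- contradicting impartiality of $(g,f)$ at $v$. By iterated Definition~\ref{def:strong-sample} (equivalently, Lemma~\ref{LB:strongsample:plusminus}), the edges of all $S$-vertices are free to be reassigned in $\x$ while preserving $g=S$, and likewise for the $S'$-vertices in $\x'$. I coordinate these reassignments so that $\tilde\x,\tilde\x'$ agree off of $v$: for $u\in S\cap S'$ pick a common target $t_u$ in both profiles; for $u\in S\setminus S'$, the $\x'$-side is fixed to $x'_u$, so set $\tilde x_u:=x'_u$; symmetrically, for $u\in S'\setminus S$ set $\tilde x'_u:=x_u$. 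Outside $S\cup S'$, the edges of $\x$ and $\x'$ already agree except at $v$.

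With the profiles aligned in this way, the remaining freedom in the $t_u$'s (together with the choice of $v$ in the reduction step) can be used to produce asymmetric $W$-sets $W_S(\tilde\x)$ and $W_{S'}(\tilde\x')$: for example, choosing all $t_u=v$ and further arranging edges so that $W_S(\tilde\x)=\{v\}$ forces the winner of $\tilde\x$ to be $v$, while a different choice of target on the $\x'$-side prevents $v$ from being the forced winner of $\tilde\x'$. Since $\tilde\x,\tilde\x'$ differ only at $v$'s edge, impartiality of $(g,f)$ at $v$ demands that $v$'s winning status match; our construction explicitly violates this, yielding the desired contradiction.

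The main obstacle I anticipate is in the final step: arranging the $W$-sets to be genuinely asymmetric with respect to $v$ despite the constraints imposed by the fixed edges $x_u$ (for $u\in S'\setminus S$) and $x'_u$ (for $u\in S\setminus S'$). In adversarial cases --- for instance, when those fixed edges happen to cooperate in making $v$ a candidate on both sides --- producing the asymmetry requires a careful case analysis on the intersection pattern of $S,S'$, possibly iterating the construction with different targets or relocating the argument to a different vertex of $N\setminus(S\cup S')$ using the freedom in the original choice of $v$.
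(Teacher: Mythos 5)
Your opening reduction is sound and matches the paper's: since every profile in $\mathcal{G}^1$ is reachable from any other by single-vertex deviations, it suffices to show that no single vertex $v$ can change the sample by rewiring its one outgoing edge, and the case $v\in g(\x)$ (or $v\in g(\x')$) is immediate from Definition~\ref{def:strong-sample}. The gap is in the remaining case $v\notin g(\x)\cup g(\x')$, and it is not a matter of bookkeeping: your alignment construction can have \emph{zero} degrees of freedom. After the Hamming reduction, $\x$ and $\x'$ already agree at every vertex other than $v$, so the only edges you may rewrite while keeping the two profiles aligned \emph{and} keeping both samples known are those of $S\cap S'$ (every $u\in S\setminus S'$ is frozen because it lies outside $g(\x')$, and symmetrically for $S'\setminus S$). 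Take $S=\{1\}$, $S'=\{2\}$, $v=3$, and $x_1=x_2=5$: then $S\cap S'=\emptyset$, nothing can be moved, $W_S(\x)=W_{S'}(\x')=\{5\}$, the winner is $5$ in both profiles, and impartiality at $v$ is vacuously satisfied. Your plan produces no contradiction here, and "relocating to a different vertex" is not available, since $v$ is dictated by where the minimal pair differs.

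What the paper does to escape exactly this situation is the step your sketch is missing: it first proves (Claim~\ref{LB:StrongSamplek:lemma1}) that the \emph{winner's} own deviation can never change the sample --- this is the hard part, requiring an induction on the winner's in-degree and a delicate base case built from the profiles $\z,\z',\y,\y'$ --- and then handles a general non-sample, non-winner vertex $u$ by \emph{forcing $u$ to become the winner}: redirect all of $S'=g(\x')$ onto $u$ to get $\z'$ with $g(\z')=S'$ and $f(\z')=u$, apply Claim~\ref{LB:StrongSamplek:lemma1} to conclude $g(x_u,\z'_{-u})=S'$, and then use Lemma~\ref{LB:strongsample:plusminus} on the corresponding modification $\z$ of $\x$ to conclude $g(\z)\neq S'$ even though the individual edge changes inside $S'\setminus S$ are not sample-preserving from $\x$'s side. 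That lemma is what lets the paper rewrite the edges of $S'$-vertices in \emph{both} profiles without needing to track the sample of the $\x$-branch exactly --- precisely the freedom your "keep the profiles aligned and both samples known" constraint forbids. Without the make-$v$-the-winner move and the induction behind Claim~\ref{LB:StrongSamplek:lemma1}, the contradiction does not materialize, so as it stands the proposal does not close the main case.
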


\begin{proof}
	Since we are in the single nomination model, without loss of generality, we can use the simplified notation $x_{u}=v$ instead of $x_{u}=\{(u,v)\}$ for any graph $\x \in \mathcal{G}^1$.
	Consider any sample mechanism $(g,f)$ and any nomination profile $\x \in \mathcal{G}^1$. It suffices to show 
	that for any vertex $u$, and any alternative vote $x'_u$, the
	sample set must remain the same, i.e., $g(x'_u,\x_{-u},)=g(\x)$. If
	$u\in g(\x)$, this immediately follows by
	Definition~\ref{def:strong-sample}.  In the following, we prove 
	two claims showing that this holds also when $u \notin g(\x)$;
	Claim~\ref{LB:StrongSamplek:lemma1} treats the case where $u$ is a
	winner of a profile, while Claim~\ref{StrongSamplek:lemma2} treats
	the case where $u$ is a not a winner.
	
	\begin{clm}\label{LB:StrongSamplek:lemma1}
		Let $(g,f)$ be an impartial deterministic strong sample mechanism
		%with samples of constant size $k$, 
		and let $\x$ be any nomination profile in $\mathcal{G}^1$. Then the sample set must
		remain the same for any other vote of the winner, i.e.,
		$g(\x)=g(x'_{f(\x)},\x_{-f(\x)})$ for any $x'_{f(\x)} \in N
		\setminus \{f(\x)\}$.
	\end{clm}

	\begin{proof}
		
		Let $w=f(\x)$ be the winner, for some nomination profile $\x$. We will prove the claim by induction on the in-degree of the winner, $\delta(w,\x)$. Note that $\delta(w,\x)>0$ for any sample mechanism and any $\x \in \mathcal{G'}$.
		
		{\bf(Base case: $\delta(w,\x)=1$)}  Let $S=g(\x)$ be the sample set for profile $\x$.
		Assume for the sake of contradiction that when $w$ changes its
		vote to $x'_w$, the sample for profile $\x'=(x'_w,\x_{-w})$ changes, i.e.,
		$g(\x')=S'\neq S$. We first note that impartiality of $f$ implies
		that $w=f(\x')$. Next, observe that the vertex voting for $w$ in $S$
		must be also in $S'$; otherwise, $w$ becomes a winner without
		getting any vote from the sample set, which contradicts our
		definition of sample mechanisms. We will show that this must be the case
		for all vertices in $S$.
		
		To do this, we will expand two parallel branches, creating a sequence of
		nomination profiles starting from $\x$ and $\x'$ which will
		eventually lead to a contradiction. 
		Figure \ref{fig:LB:X} depicts the situation for $\x$ and $\x'$.

		\begin{figure}[h]
	\subfloat[profile $\x$]{\label{fig:LB:x}
		\centering
		\begin{tikzpicture}[winner/.style={circle,draw=black!80,fill=black!80!,very thick,minimum size=0.8cm,text=white},simple/.style={very thick,circle,draw=black!80,minimum size=0.8cm},sample/.style={circle, very thick,draw=black!80!red,fill=black!30!,minimum size=0.8cm, dashed},cand/.style={circle,draw=red!80,fill=black!10,minimum size=0.8cm,very thick},,scale=1, every node/.style={transform shape}]
		
		\node[simple] at  (-1.5,1) (a) {$a$};
		\node[simple] at  (1.5,1) (b) {$b$};
		\node[winner] at  (0,0) (w) {$w$};
		
		\node[sample] at  (-3,-2) (s) {$s$};
		\node[sample] at  (-1.5,-2) (s1) {$ $};
		\node[sample] at  (0,-2) (s2) {$ $};
		
		\node[simple] at  (1.5,-2) (sp1) {$ s' $};
		\node[simple] at  (3,-2) (sp2) {$ $};
		
		\draw[-Latex ] (s) to (w);		
		\draw[-Latex ] (w) to (a);
		\end{tikzpicture}
	}\hfill
	\subfloat[profile $\x'$]{\label{fig:LB:xp}
		\centering
		\begin{tikzpicture}[winner/.style={circle,draw=black!80,fill=black!80!,very thick,minimum size=0.8cm,text=white},simple/.style={very thick,circle,draw=black!80,minimum size=0.8cm},sample/.style={circle, very thick,draw=black!80!red,fill=black!30!,minimum size=0.8cm, dashed},cand/.style={circle,draw=red!80,fill=black!10,minimum size=0.8cm,very thick},,scale=1, every node/.style={transform shape}]
		\node[simple] at  (-1.5,1) (a) {$a$};
		\node[simple] at  (1.5,1) (b) {$b$};
		\node[winner] at  (0,0) (w) {$w$};
		
		\node[sample] at  (-3,-2) (s) {$s$};
		\node[simple] at  (-1.5,-2) (s1) {$ $};
		\node[simple] at  (0,-2) (s2) {$ $};
		
		\node[sample] at  (1.5,-2) (sp1) {$ s' $};
		\node[sample] at  (3,-2) (sp2) {$ $};
		
		\draw[-Latex ] (s) to (w);		
		\draw[-Latex ] (w) to (b);
		\end{tikzpicture}
	}
	\caption{The starting profiles $\x$ and $\x'$ in Claim~\ref{LB:StrongSamplek:lemma1}. The dark vertex is the winner, while the light, dashed-lined vertices are the members of the sets $S$ and $S'$, respectively.}
	\label{fig:LB:X}
\end{figure}

		We start with the profile $\x'$. Consider a vertex $s' \in S' \setminus S$. We create
		a profile $\z'$ in which all vertices in $S'\setminus s'$ vote for $s'$
		(i.e., $z_v=s'$, for each $v\in S'\setminus s'$), vertex $s'$ votes for
		$w$ (i.e., $z_v=w$), while the rest of the vertices vote as in $\x'$
		(i.e., $z_v=x_v$, for each $v\not \in S'$). For illustration, see
		Figures \ref{fig:LB:z} and \ref{fig:LB:zp}. By the definition of a
		strong sample mechanism, we obtain $g(\z')=g(\x')$, since only votes
		of vertices in $S'$ have changed. Notice also that $f(\z')=w$, as this is
		the only vertex outside $S'$ that receives votes from $S'$. We now move
		to profile $\x$ and apply the same sequence of deviations, involving all the
		vertices in $S'$. These lead to the profile $\z$, which differs from
		$\z'$ only in the outgoing edge of vertex $w$.
		
		\begin{figure}[tp]
	\subfloat[profile $\z$]{\label{fig:LB:z}   
		\centering
		\begin{tikzpicture}[winner/.style={circle,draw=black!80,fill=black!80!,very thick,minimum size=0.8cm},simple/.style={very thick,circle,draw=black!80,minimum size=0.8cm},sample/.style={circle, very thick,draw=black!80!red,fill=black!40!green!60,minimum size=0.8cm, dashed},cand/.style={circle,draw=red!80,fill=black!10,minimum size=0.8cm,very thick},scale=0.7, every node/.style={transform shape}]
		
		\node[simple] at  (-1.5,1) (a) {$a$};
		\node[simple] at  (1.5,1) (b) {$b$};
		\node[simple,diamond,draw=black,minimum size=1cm] at  (0,0) (w) {$w$};
		
		\node[simple] at  (-3,-2) (s) {$s$};
		\node[simple] at  (-1.5,-2) (s1) {$ $};
		\node[simple] at  (0,-2) (s2) {$ $};
		
		\node[simple,dashed,draw=black,diamond,minimum size=1cm] at  (1.5,-2) (sp1) {$ s' $};
		\node[simple] at  (3,-2) (sp2) {$ $};
		
		\draw[-Latex, bend right ] (s) to (sp1);
		\draw[-Latex ] (sp1) to (w);
		\draw[-Latex ] (sp2) to (sp1);
		
		\draw[-Latex ] (w) to (a);
		\end{tikzpicture}
	}
	\hfill
	\subfloat[profile $\z'$]{\label{fig:LB:zp}
		\centering
		\begin{tikzpicture}[winner/.style={circle,draw=black!80,fill=black!80!,very thick,minimum size=0.8cm,text=white},simple/.style={very thick,circle,draw=black!80,minimum size=0.8cm},sample/.style={circle, very thick,draw=black!80!red,fill=black!30!,minimum size=0.8cm, dashed},cand/.style={circle,draw=red!80,fill=black!10,minimum size=0.8cm,very thick},scale=0.7, every node/.style={transform shape}]
		
		\node[simple] at  (-1.5,1) (a) {$a$};
		\node[simple] at  (1.5,1) (b) {$b$};
		\node[winner] at  (0,0) (w) {$w$};
		
		\node[sample] at  (-3,-2) (s) {$s$};
		\node[simple] at  (-1.5,-2) (s1) {$ $};
		\node[simple] at  (0,-2) (s2) {$ $};
		
		\node[sample] at  (1.5,-2) (sp1) {$ s' $};
		\node[sample] at  (3,-2) (sp2) {$ $};
		
		\draw[-Latex, bend right ] (s) to (sp1);
		\draw[-Latex ] (sp1) to (w);
		\draw[-Latex ] (sp2) to (sp1);
		
		\draw[-Latex ] (w) to (b);
		\end{tikzpicture}
	}\\
	\hfill
	\subfloat[profile $\y$]{\label{fig:LB:y}
		\begin{tikzpicture}[winner/.style={circle,draw=black!80,fill=red!80!,very thick,minimum size=0.8cm},simple/.style={very thick,circle,draw=black!80,minimum size=0.8cm},sample/.style={circle, very thick,draw=black!80!red,fill=black!30!,minimum size=0.8cm, dashed},cand/.style={circle,draw=red!80,fill=black!10,minimum size=0.8cm,very thick},scale=0.7, every node/.style={transform shape}]
		
		\node[simple] at  (-1.5,1) (a) {$a$};
		\node[simple] at  (1.5,1) (b) {$b$};
		\node[simple,draw=black,diamond,minimum size=1cm] at  (0,0) (w) {$w$};
		
		\node[simple] at  (-3,-2) (s) {$s$};
		\node[simple] at  (-1.5,-2) (s1) {$ $};
		\node[simple] at  (0,-2) (s2) {$ $};
		
		\node[sample] at  (1.5,-2) (sp1) {$ s' $};
		\node[simple,dashed,draw=black,diamond,minimum size=1cm] at  (3,-2) (sp2) {$v$};
		
		\draw[-Latex, bend right ] (s) to (sp1);
		\draw[-Latex ] (sp1) to (sp2);
		\draw[-Latex ] (sp2) to (w);
		
		\draw[-Latex ] (w) to (a);
		\end{tikzpicture}   	
	}
	\hfill
	\subfloat[profile $\y'$]{\label{fig:LB:yp}   
		\begin{tikzpicture}[winner/.style={circle,draw=black!80,fill=black!80!,very thick,minimum size=0.8cm,text=white},simple/.style={very thick,circle,draw=black!80,minimum size=0.8cm},sample/.style={circle, very thick,draw=black!80!red,fill=black!30!,minimum size=0.8cm, dashed},cand/.style={circle,draw=red!80,fill=black!10,minimum size=0.8cm,very thick},scale=0.7, every node/.style={transform shape}]
		
		\node[simple] at  (-1.5,1) (a) {$a$};
		\node[simple] at  (1.5,1) (b) {$b$};
		\node[winner] at  (0,0) (w) {$w$};
		
		\node[sample] at  (-3,-2) (s) {$s$};
		\node[simple] at  (-1.5,-2) (s1) {$ $};
		\node[simple] at  (0,-2) (s2) {$ $};
		
		\node[sample] at  (1.5,-2) (sp1) {$ s' $};
		\node[sample] at  (3,-2) (sp2) {$v$};
		
		\draw[-Latex, bend right ] (s) to (sp1);
		\draw[-Latex ] (sp1) to (sp2);
		\draw[-Latex ] (sp2) to (w);
		
		\draw[-Latex ] (w) to (b);
		\end{tikzpicture}
	}
	\caption{Profiles $\z$ and $\z'$ in the base case of the proof of Claim~\ref{LB:StrongSamplek:lemma1}: if $v=s'$ then $s' \notin g(\z)$ and since this is the only vertex voting for $w$, $w$ cannot win in $\z$, while it must be the winner in $\z'$ ---a contradiction. Profiles $\y$ and $\y'$: if $ s' \in g(\y)$, we let $s'$ vote for $v$ and $v$ for $w$, making $w$ the winner in $\y'$ but not in $\y$.
		A dark circle denotes the winner, while light, dashed-lined circles denote the members of the sample sets $S$ and $S'$.
		A solid-lined diamond denotes a vertex that cannot be the winner and a dashed-lined diamond denotes a vertex that cannot be in the sample set.}
	\label{fig:LB:Z}
\end{figure}

		By Lemma~\ref{LB:strongsample:plusminus}, there is a vertex $v \in
		S'$ such that $v \notin g(\z)$. If $v=s'$, then we end up in a
		contradiction. This is because $f(\z) \neq w$, since $s'$ is the only
		vertex voting for $w$ in $\z'$ and $s'$ is not in the sample, while $f(\z') =  w$,
		as stated by the other branch and since, when $w$ change its vote to $x'_w$, the created profile is
		$(x'_w,\z_{-w})=\z'$ contradicting impartiality (see also
		Figures \ref{fig:LB:z} and \ref{fig:LB:zp}).

		We are now left with the case where $s'\in g(\z)$ and $v\neq
		s'$. Starting from $\z$ and $\z'$, we will create profiles $\y$ and
		$\y'$ (see Figures \ref{fig:LB:y} and \ref{fig:LB:yp}) as follows: we construct $\y$ by letting $s'$ 
		vote towards $v$ (i.e., $y_{s'}=v$), $v$ vote towards $w$ (i.e., $y_v=w$) and
		$y_i=z_i$ for all other vertices $i\neq v,s'$. By the strong sample
		property, when $s'$ votes towards $v$ the sample set is preserved,
		i.e., $v$ cannot get in the sample. Also, when $v$ votes, $v$ cannot
		get in the sample (by a trivial application of Lemma
		\ref{LB:strongsample:plusminus}); therefore, $v\not \in g(\y)$. Hence, 
		$w$ cannot be the winner as its only incoming vote is from $v$, a vertex that does not
		belong to the sample set $g(\y)$.
		
		Starting from $\z'$, we create similarly $\y'$ by letting $s'$ vote
		towards $v$ ($y'_{s'}=v$), $v$ to vote towards $w$ ($y'_v=w$) and
		$y'_i=z_i$ for all other vertices $i\neq v,s'$. In this case, $S'$ will be
		preserved as sample set in profile $\y'$ (i.e. $g(\y')=S'$). Therefore, $w$ is the only vertex voted by the sample set
		and must be the winner, leading to a contradiction (see Figures
		\ref{fig:LB:y} and \ref{fig:LB:yp}).
		
		{\bf (Induction step)} Assume as induction hypothesis that, for all
		profiles $\x \in \mathcal{G}^1$, it holds $g(\x)=g(x'_w,\x_{-w})=S$ when
		$\delta(w,\x)\leq \lambda$, for some $\lambda \geq 1$. Now, consider
		any profile $\x$ where $f(\x)=w$ and $\delta(w,\x)= \lambda+1$ and
		assume for the sake of contradiction that there is some graph
		$\x'=(x'_w,\x_{-w})$ where $g(\x')=S' \neq S$. Without loss of generality, let
		$\delta_{S}(w,\x) \leq \delta_{S'}(w,\x')$.

		Starting from $\x'$, we create profile $\z'$, by letting all vertices in
		$S'$ vote for some $s' \in S'$ and $s'$ vote for $w$, i.e., $z'_v
		= s'$ for each vertex $v \in S'\setminus\{s'\}$ and $z'_{s'}=w$. The
		strong sample property implies that $g(\z')=S'$ and
		$f(\z')=w$. We focus now on profile $\x$, and create the profile $\z$, by performing the same series of deviations, i.e., by letting all vertices in $S'\setminus{s'}$ vote for $s'$ and $s'$ vote for
		$w$. Note here that $\z$ differs from $\z'$ only in the outgoing edge of $w$.
		Like before, Lemma~\ref{LB:strongsample:plusminus} establishes
		that there will be some vertex $v \in S'$ such that $v \notin g(\z)$, i.e., $g(\z) \neq S'$.
		Turning our attention back to $\z'$, we let $w$ change its vote to
		$x_w$, creating profile $(x_w,\z'_{-w})$. Observe that $(x_w,\z'_{-w})=\z$. When $\delta(w,\z') <
		\delta(w,\x)$, by the induction hypothesis we have $g(\z)=S'$, a
		contradiction. 
		
		We need also to  handle the case $\delta(w,\z') = \delta(w,\x)$. We will use a series of careful steps to decrease the in-degree of $w$, without changing the sample set. This will allow us to use the induction hypothesis to finalize our proof. 
		
		Let $L$ denote the set of vertices which vote for $w$ in profile $\x$. We note here that, we may end up in the case $\delta(w,\z') = \delta(w,\x)$ only because a single vertex votes for $w$ in $\z'$, i.e. $|g(\z') \cap L|=1$; otherwise we could decrease the in-degree of $w$ in $\z'$ without changing the sample set and directly use the induction hypothesis to prove the claim. The aforementioned vertex $s'$ is the single vertex in $g(\z') \cap L$. Note here that there exists at least one vertex in $g(\x) \cap L$. Say this is vertex $s$. If $\delta(s,\z') \leq \lambda-1$, we can create the profile $\y'$, where $s'$ votes for $s$ (i.e. $y'_{s'}=s$ and $\y'=(y'_{s'},\z'_{-s'})$), hence $f(\y')=s$ and $g(\y')=g(\z')=S'$ (recall that $s'$ is the single vertex in $g(\z')$ voting outside of $g(\z')$ ). We can create now the profile $\q'$ where vertex $s$ votes for vertex $s'$ (i.e. $q'_s=s'$) and the other vertices vote like in $\y'$, i.e. $\q'=(q'_s,\y'_{-s})$. Since $\delta(s,\y') \leq \lambda$ and $f(\y')=s$, by changing the outgoing edge of the winning vertex $s$, the sample set does not change, due to the induction hypothesis, i.e. $g(\q')=g(\y')=S'$. Finally, we create the profile $\rr'$, where $s'$ votes for $w$ (i.e. $r'_{s'}=w$) and the other vertices vote like in $\q'$ (i.e. $\rr'=(r'_{s'},\q'_{-s'})$). The strong sample property now implies that $g(\rr')=g(\q')=S'$ and $f(\rr')=w$. Since $\delta(w,\rr')= \lambda$, we can invoke the induction hypothesis once again: we can create profile $\rr$ by letting $w$ vote as in $\x$ (i.e. $\rr=(x_w,\rr'_{-w})$) and  $g(\rr)=S'$. 
		
		At this point, we reverse our previous moves. First, we create the profile $\q$ by allowing $s'$ change its vote for $s$ (i.e. $q_{s'}=s$ and $\q=(q_{s'},\rr_{-s'})$). Again, the strong sample property implies that $g(\q)=g(\rr)=S'$, which results to $f(\q)=s$. Finally, we create profile $\y$ by letting $s$ vote for $w$ (i.e. $y_{s}=w$ and $\y=(y_s,\q_{-s})$). The induction hypothesis implies now that $g(\y)=S'$. Observe that $\y$ is indeed profile $\z$ (i.e. $\y=\z$), which is identical to $\z'$, except from the vote of $w$, hence $g(\z)=S'$. Recall however, that $g(\z)\neq S'$, a contradiction. Given that $\delta(s,\z') \leq \lambda-1$, the claim follows.
				
				\begin{figure}[htp]
			\centering
			\subfloat[profile $\rr^{(0)'}$]{\label{fig:LB:OneSamplerIn:1}
				\centering
				\begin{tikzpicture}[winner/.style={circle,draw=black!80,fill=black!80!,very thick,minimum size=0.8cm,text=white},simple/.style={very thick,circle,draw=black!80,minimum size=0.8cm},sample/.style={circle, very thick,draw=black!80!red,fill=black!30!,minimum size=0.8cm, dashed}, cand/.style={circle,draw=red!80,fill=black!10,minimum size=0.8cm,very thick},node distance=2cm, minimum size=1cm,scale=0.75]
				\node[winner] at  (0,0) (w) {$w$};
				
				\node[simple] (s1) [below left of=w]  {$ $};
				\node[simple] (s) [below of=w]{$ s$};
				\node[sample] (sp) [below right of=w]{$ s' $};
				
				\node (sdots) [below right of =sp] {$\cdots$};
				\node (sc2) [below of =s1] {$\cdots$};
				\node[simple]  (sc3) [below of= s ] {$s_1 $};
				\node[simple]  (sc4) [below right of= s ]{$s_2 $};
				
				\node at  (-2.5,-6) (sc21) {$ $};
				\node[simple]  (sc31) [below of=sc3] {$ $};
				\node[simple]  (sc41) [below of=sc4] {$s_3$};

				\draw[-Latex,dashed ] (s1) to (w);
				\draw[-Latex,dashed ] (sp) to (w);
				\draw[-Latex, thick ] (s) to (w);
				\draw[-Latex,dashed ] (sc2) to (s);
				\draw[-Latex, thick] (sc3) to (s);
				\draw[-Latex, thick] (sc4) to (s);
				\draw[-Latex,dashed ] (sc31) to (sc3);
				\draw[-Latex, thick ] (sc41) to (sc4);
				\draw[-Latex,dashed ] (w) to (sc4);
				\draw[-Latex,dashed ] (sdots) to (sp);
				\end{tikzpicture}
			}
			\hfill
			\subfloat[profile $\y^{(1)'}$]{\label{fig:LB:OneSamplerIn:2}
					\centering
				\begin{tikzpicture}[winner/.style={circle,draw=black!80,fill=black!80!,very thick,minimum size=0.8cm,text=white},simple/.style={very thick,circle,draw=black!80,minimum size=0.8cm},sample/.style={circle, very thick,draw=black!80!red,fill=black!30!,minimum size=0.8cm, dashed}, cand/.style={circle,draw=red!80,fill=black!10,minimum size=0.8cm,very thick},node distance=2cm, minimum size=1cm,scale=0.75]
				
				\node[simple] at  (0,0) (w) {$w$};
				
				\node[simple] (s1) [below left of=w]  {$ $};
				\node[simple] (s) [below of=w]{$ s$};
				\node[sample] (sp) [below right of=w]{$ s' $};
				
				\node (sdots) [below right of =sp] {$\cdots$};
				\node (sc2) [below of =s1] {$\cdots$};
				\node[simple]  (sc3) [below of= s ] {$s_1 $};
				\node[simple]  (sc4) [below right of= s ]{$s_2 $};
				
				\node at  (-2.5,-6) (sc21) {$ $};
				\node[simple]  (sc31) [below of=sc3] {$ $};
				\node[winner]  (sc41) [below of=sc4] {$s_3$};

				\draw[-Latex,dashed ] (s1) to (w);
				\draw[-Latex,bend left,dashed ] (sp) to (sc41);
				\draw[-Latex, thick ] (s) to (w);
				\draw[-Latex,dashed ] (sc2) to (s);
				\draw[-Latex, thick] (sc3) to (s);
				\draw[-Latex, thick] (sc4) to (s);
				\draw[-Latex, dashed ] (sc31) to (sc3);
				\draw[-Latex, thick ] (sc41) to (sc4);
				\draw[-Latex,dashed ] (w) to (sc4);
				\draw[-Latex,dashed ] (sdots) to (sp);
				\end{tikzpicture}
			}\\
			\subfloat[profile $ \q^{(1)'}$]{\label{fig:LB:OneSamplerIn:3}
				\begin{tikzpicture}[winner/.style={circle,draw=black!80,fill=black!80!,very thick,minimum size=0.8cm,text=white},simple/.style={very thick,circle,draw=black!80,minimum size=0.8cm},sample/.style={circle, very thick,draw=black!80!red,fill=black!30!,minimum size=0.8cm, dashed}, cand/.style={circle,draw=red!80,fill=black!10,minimum size=0.8cm,very thick},node distance=2cm, minimum size=1cm,scale=0.75]
				
				\node[simple] at  (0,0) (w) {$w$};
				
				\node[simple] (s1) [below left of=w]  {$ $};
				\node[simple] (s) [below of=w]{$ s$};
				\node[sample] (sp) [below right of=w]{$ s' $};
				
				\node (sdots) [below right of =sp] {$\cdots$};
				\node (sc2) [below of =s1] {$\cdots$};
				\node[simple]  (sc3) [below of= s ] {$s_1 $};
				\node[simple]  (sc4) [below right of= s ]{$s_2 $};
				
				\node at  (-2.5,-6) (sc21) {$ $};
				\node[simple]  (sc31) [below of=sc3] {$ $};
				\node[winner]  (sc41) [below of=sc4] {$s_3$};

				\draw[-Latex ,dashed] (s1) to (w);
				\draw[-Latex,bend left,dashed ] (sp) to (sc41);
				\draw[-Latex, thick ] (s) to (w);
				\draw[-Latex,dashed ] (sc2) to (s);
				\draw[-Latex, thick] (sc3) to (s);
				\draw[-Latex, thick] (sc4) to (s);
				\draw[-Latex,dashed ] (sc31) to (sc3);
				\draw[-Latex, bend right,dashed ] (sc41) to (sp);
				\draw[-Latex,dashed ] (w) to (sc4);
				\draw[-Latex,dashed ] (sdots) to (sp);
				\end{tikzpicture}   
			}
			\hfill
			\subfloat[profile $\rr^{(4)'}$]{\label{fig:LB:OneSamplerIn:4}
				\begin{tikzpicture}[winner/.style={circle,draw=black!80,fill=black!80!,very thick,minimum size=0.8cm,text=white},simple/.style={very thick,circle,draw=black!80,minimum size=0.8cm},sample/.style={circle, very thick,draw=black!80!red,fill=black!30!,minimum size=0.8cm, dashed}, cand/.style={circle,draw=red!80,fill=black!10,minimum size=0.8cm,very thick},node distance=2cm, minimum size=1cm,scale=0.75]
				
				\node[winner] at  (0,0) (w) {$w$};
				
				\node[simple] (s1) [below left of=w]  {$ $};
				\node[simple] (s) [below of=w]{$ s$};
				\node[sample] (sp) [below right of=w]{$ s' $};
				
				\node (sdots) [below right of =sp] {$\cdots$};
				\node (sc2) [below of =s1] {$\cdots$};
				\node[simple]  (sc3) [below of= s ] {$s_1 $};
				\node[simple]  (sc4) [below right of= s ]{$s_2 $};
				
				\node at  (-2.5,-6) (sc21) {$ $};
				\node[simple]  (sc31) [below of=sc3] {$ $};
				\node[simple]  (sc41) [below of=sc4] {$s_3$};

				\draw[-Latex,dashed ] (s1) to (w);
				\draw[-Latex,dashed ] (sp) to (w);
				\draw[-Latex, dashed ] (s) to (sp);
				\draw[-Latex,dashed ] (sc2) to (s);
				\draw[-Latex,dashed  ] (sc3) to (sp);
				\draw[-Latex,dashed] (sc4) to (sp);
				\draw[-Latex ,dashed] (sc31) to (sc3);
				\draw[-Latex, bend right,,dashed ] (sc41) to (sp);
				\draw[-Latex,dashed ] (w) to (sc4);
				\draw[-Latex,dashed ] (sdots) to (sp);
				\end{tikzpicture}
			}
			\caption{
				Induction step for the proof of Claim~\ref{LB:StrongSamplek:lemma1}. An example with $\lambda=2$. To use the induction hypothesis, we need to decrease the in-degree of $w$ to $2$. In figure~\ref{fig:LB:OneSamplerIn:1}, vertex $s'$ is the single vertex in the sample, voting for the winner $w$. The other vertices in the sample vote for $s'$. To decrease the in-degree of $w$, we identify the tree $T$, denoted by the solid thick edges. First, we let $s'$ vote for $s_3$, which is now the winner (figure \ref{fig:LB:OneSamplerIn:2}). Then we let $s_3$ vote for $s'$. By impartiality $s_3$ (with in-degree of $1$) retains its winner status and the sample set is still the same (figure \ref{fig:LB:OneSamplerIn:3}).  The same procedure continues until all edges of $T$ are redirected to $s'$ and the in-degree of $w$ decreases to $2$. During this process, the sample set remains invariant. (figure $\ref{fig:LB:OneSamplerIn:4}$). The dark vertex denotes the winner, while the light, dashed-lined vertices denote members of the sample set.}
			\label{fig:LB:OneSamplerInL}
		\end{figure}

		If $\delta(s,\z') > \lambda-1$, we generalize the idea described in the previous two paragraphs. We will revert enough edges towards $s'$, for the in-degree of $s$ to become equal to $\lambda-1$. To do this we identify a tree $T$ in $\z'$, with vertex $s$ as the root: We start from the vertices voting towards $s$ and we select $\delta(s,\z)-(\lambda-1)$ of them as children of $s$: first, we select vertices with in-degree at most $\lambda-1$. If we end up with vertices with higher in-degree than $\lambda-1$, we repeat the process for each child, until all leafs in the tree have in-degrees at most $\lambda-1$. This is assured, since we are in the single nomination model and each vertex belongs in at most one directed cycle. 
		
		Let $k$ be the number of vertices in the tree $T$ and let $\rr^{(0)'}=\z'$; hence $g(\rr^{(0)'})=S'$. Starting from an arbitrary leaf on $T$, let $v_i$ denote the $i$-th vertex we visit. For each $i \in \{1,...,k\}$, we create three profiles: $\y^{(i)'},\q^{(i)'}$ and $\rr^{(i)'}$. First, we create profile $\y^{(i)'}$ by letting $s'$ vote for $v_i$ (i.e. $y^{(i)'}_{s'}=v_i$) and the remaining vertices vote like in $\rr^{(i-1)'}$, i.e. $\y^{(i)'}=(y^{(i)'}_{s'},\rr^{(i-1)'}_{-v_i})$. Due to the strong sample property, $g(\y^{(i)'})=g(\rr^{(i-1)'})$. Also, $f(\y^{(i)'})=v_i$ since $v_i$ is the only vertex voted by the sample. Then we create the profile $\q^{(i)'}$, where $q^{(i)'}_{v_i}=s'$ and the other vertices vote like $\y^{(i)'}$, i.e. $\q^{(i)'}=(q^{(i)'}_{v_i},\y^{(i)'}_{-i})$. Due to impartiality $f(\q^{(i)'})=v_i$. If we are traversing the tree $T$ from the leaves to the root, each vertex $v_i$ has in-degree at most $\lambda$ and by the induction hypothesis $g(\q^{(i)'})=g(\y^{(i)'})=S'$. Finally, we create the profile $\rr^{(i)'}$ by letting $s'$ vote for $w$, i.e $r^{(i)'}_{s'}=w$ and $\rr^{(i)'}=(r^{(i)'}_{s'},\y^{(i)'}_{-s'})$. We traverse the vertices starting from a leaf, and after visiting all vertices in the same level, we pass to the next level and we keep the order of the vertices visited. Note that in all these changes the \emph{sample set does not change} and each vertex in $T$ (including vertex $s$, which is traversed last, i.e. $v_k=s$) has in-degree at most $\lambda$. An example of this process is depicted in Figure~\ref{fig:LB:OneSamplerInL}.
		
		At this point, we start a reverse procedure. We first create the profile $\rr^{(k)}=(x_w,\rr^{(k)'}_{-w})$, where we let vertex $w$ to vote like in profile $\x$. By the induction hypothesis, $g(\rr^{(k)})=g(\rr^{(k)'})=S'$, since $f(\rr^{(k)'})=w$ and $\delta(w,\rr^{(k)'}) \leq \lambda$. We then start to traverse the vertices in tree $T$ on the opposite direction, i.e. $v_k,v_{k-1},...,v_{1}$. For each $i \in \{1,...,k\}$ We create a similar series of profiles, where the sample set will remain invariant. Starting from  $\rr^{(i)}$ we create the profile $\q^{(i)}$, where $s'$ votes towards $q^{(i)'}_{s'}$, i.e. $\q^{(i)}=(q^{(i)'}_{s'},\rr^{(i)}_{-s'})$. Due to the strong sample property  $g(\q^{(i)})=g(\rr^{(i)})=S'$ and $f(\q^{(i)})=q^{(i)'}_{s'}$. Observe that $q^{(i)}$ and $q^{(i)'}$ differ only in the outgoing edge of $w$. As a result $\delta(q^{(i)'}_{s'},\q^{(i)}) \leq \lambda$. We create now the profile $\y^{(i)}$ where  $q^{(i)'}_{s'}$, the winning node in $\q^{(i)}$, votes towards $y^{(i)'}_{q^{(i)'}_{s'}}$, i.e. $y^{(i)}_{q^{(i)'}_{s'}}=y^{(i)'}_{q^{(i)'}_{s'}}$  and $\y^{(i)} = ( y^{(i)}_{q^{(i)'}_{s'}},\q^{(i)}_{-q^{(i)'}_{s'}}) $. Again $\y^{(i)}$ and $\y^{(i)'}$ defer only in the vote of $w$. Because of the induction hypothesis $g(\y^{(i)})=g(\q^{(i)})=S'$.	
		Finally, we revert $s'$ towards $w$ and create the profile $\rr^{(i-1)}$ such that $r^{(i-1)}_{s'}=w$ and $\rr^{(i-1)}=(r^{(i-1)}_{s'}, \y^{(i)}_{-s'})$. Again $g(\rr^{(i-1)})=g(\y^{(i)})=S'$. 
		
		After this series of changes, we end up in profile $\rr^{(0)}$, which differs from $\rr^{(0)'}$ only in the outgoing edge of $w$. Since in all changes described above the sample set remains invariant, then $g(\rr^{(0)})=S'$.
		Observe now that $\rr^{(0)}=\z$, for which we know that $g(\z) \neq S'$, a contradiction. This concludes the proof.
	\end{proof}
	
	\noindent The next claim establishes the remaining case, that no vertex $u\not
	\in g(\x), u\neq f(\x)$ can change the sample set.
	
	\begin{clm}\label{StrongSamplek:lemma2}
		Let $(g,f)$ be an impartial deterministic strong sample mechanism,
		$\x$ be a nomination profile in $\mathcal{G}^1$ and $u$ a vertex with $u\not \in g(\x), u\neq f(\x)$. Then
		$g(\x)=g(x'_{u},\x_{-u})$ for any other vote $x'_{u} \in N
		\setminus \{{u}\}$.
	\end{clm}

	\begin{proof}
		For the sake of contradiction, consider any profile $\x \in \mathcal{G}^1$ and assume that there exists some
		nomination profile $\x'=(x'_{u},\x_{-u})$ with $g(\x')=S' \neq  g(\x)$.
		Starting from $\x'$, we define a profile $\z'$ in which all vertices in
		$S'$ vote for $u$, and the rest vote as in $\x'$. That is, $z'_v=u$,
		for all $v\in S'$ and $z'_v=x'_v$ otherwise. Clearly $f(\z')=u$, as
		all the sample vertices vote for $u$. By
		Claim~\ref{LB:StrongSamplek:lemma1}, we know that
		$g(x_u,\z'_{-u})=g(\z')=S'$.
		
		Starting from $\x$, we define a profile $\z$ in which all vertices in
		$S'$ vote for $u$, and the rest vote as in $\x$. Since $S'\neq
		g(\x)$, by Lemma~\ref{LB:strongsample:plusminus}, we get $g(\z)
		\neq S'$. Observe that $\z=(x_u,\z'_{-u})$, which leads to a
		contradiction.
	\end{proof}
	
	\noindent This completes the proof of Theorem~\ref{LB:StrongSamplek}.
\end{proof}

We next use Theorem~\ref{LB:StrongSamplek} to obtain lower bounds on
the additive approximation guarantee obtained by any deterministic strong sample mechanisms.

\begin{corollary}
	There is no impartial deterministic strong sample mechanism with additive approximation better than $n-2$ in the single nomination model.
\end{corollary}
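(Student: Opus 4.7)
By Theorem~\ref{LB:StrongSamplek}, any impartial deterministic strong sample mechanism $(g,f)$ uses a sample set $S$ that is completely independent of the input profile. My plan is therefore to fix an arbitrary such $S$ and exhibit a single single-nomination profile on which the gap between $\Delta$ and the in-degree of the returned winner (taken to be $0$ when no winner is returned, as forced by the paper's definition of $\Exp{\delta(f(\x))}$) is at least $n-2$. The construction splits into two cases according to $|S|$.

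\textbf{Case $|S|\geq 2$.} Pick distinct $s_1,s_2\in S$ and take the profile $\x$ in which $s_1$ nominates $s_2$ and every other vertex (in or out of $S$) nominates $s_1$. Then $\delta(s_1,\x)=n-1$. Every outgoing edge from $S$ lands inside $S$ ($s_1\to s_2$, and every $s\in S\setminus\{s_1\}$ points to $s_1$), so $W=\emptyset$ and the mechanism returns no winner. The gap is thus $n-1>n-2$, which already beats the target. The degenerate subcase $|S|=n$ is subsumed here, since then vacuously $W=\emptyset$ for every profile.

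\textbf{Case $|S|=1$.} Write $S=\{s_1\}$ and fix any $v\in N\setminus\{s_1\}$. Take the profile $\x$ in which $s_1$ nominates $v$ and every other vertex nominates $s_1$. Then $\delta(s_1,\x)=n-1$. The only outgoing edge from $S$ is $s_1\to v$, so $W=\{v\}$. Because $f$ is restricted to select inside $W$ and $W$ is nonempty, the winner must be $v$, whose in-degree equals exactly $1$ (only $s_1$ nominates it). The gap is $n-1-1=n-2$, as required.

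The only point requiring care is the convention for ``no winner'', but this is already dictated by the paper's formula $\Exp{\delta(f(\x))}=\sum_{u\in N}(f(\x))_u\,\delta(u,\x)$, which evaluates to $0$ exactly when $f$ returns no vertex. Beyond that there is no obstacle: the real heavy lifting lies in Theorem~\ref{LB:StrongSamplek}, which collapses the entire class of strong sample mechanisms to a fixed-sample choice, after which the adversarial profile is essentially forced in each case.
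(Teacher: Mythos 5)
Your proof is correct and follows essentially the same route as the paper: invoke Theorem~\ref{LB:StrongSamplek} to fix the sample set $S$, then direct all nominations to a vertex of $S$ so that the winner (if any) has in-degree at most $1$ while $\Delta=n-1$. Your explicit case split on $|S|$ merely makes precise the no-winner situation ($W=\emptyset$) that the paper's one-line construction glosses over.
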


\begin{proof}
	Let $S$ be the sample set which, by Theorem~\ref{LB:StrongSamplek},
	must be selected independently of $\x$, and let $v\in S$. Define $\x$
	so that all vertices in $N\setminus \{v\}$ vote for $v$ and all other vertices have in-degree either $0$ or $1$. Then, $\Delta(\x)=n-1$,
	but the mechanism selects a vertex of in-degree exactly $1$.  
\end{proof}

We remark that the strong sample mechanism that uses a specific vertex as singleton sample achieves this additive approximation guarantee.

Our next step, is to extend the notion of sample mechanisms to
randomized variants and provide a lower bound on their additive
approximation guarantee, which shows that {\sc Sample and Vote} (with
$k=\Theta(\sqrt{n})$; see Section~\ref{sec:random-upper}) is an
optimal mechanism from this class. We next define the family of
randomized strong sample mechanisms.
\begin{definition}(Randomized strong sample mechanisms)\label{StrongSample:RandDef}
	A randomized strong sample mechanism $(g,f)$ is a probability distribution over a family $\{(g_i,f_i): i \in \mathbb{N} \}$ of strong sample mechanisms.
\end{definition}

Note that {\sc Sample and Vote} and {\sc Sample and Poll} are both randomized strong sample mechanisms: For a given $k$, each of the possible sample sets define a deterministic sample mechanism, and the winner (if any) belongs in the set $W$. This is however not the case for more complex mechanisms like those appearing in \cite{bousquet2014} and in \cite{fischer2015}.

\begin{corollary}\label{StrongSample:Rand}
	There is no impartial randomized strong sample mechanism with additive approximation better than
	$\Omega(\sqrt{n})$ in the single nomination model.
\end{corollary}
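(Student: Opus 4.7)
My plan is to apply Yao's minimax principle together with the structural characterization of Theorem~\ref{LB:StrongSamplek}. Viewing a randomized strong sample mechanism as a distribution over deterministic strong sample mechanisms, Theorem~\ref{LB:StrongSamplek} tells us that each deterministic one picks a fixed sample set $S\subseteq N$, whose size $k=|S|$ is the only ``parameter'' we need to control. Hence it suffices to construct a single distribution $D$ over profiles in $\mathcal{G}^1_n$ such that, for every fixed $S\subseteq N$ and every selection rule $f$ acting on $W_S(\x)$, the expected additive gap under $\x\sim D$ is $\Omega(\sqrt{n})$.

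I will take $D=\tfrac12 D_1+\tfrac12 D_2$, a mixture of two hard distributions tuned to complementary ranges of $k$. Distribution $D_1$ is the ``star'': pick $u^*\in N$ uniformly at random, let every $v\neq u^*$ nominate $u^*$, and have $u^*$ nominate a fixed vertex $v_1\neq u^*$. Here $\Delta(\x)=n-1$, and whenever $u^*\in S$ (probability $k/n$) the set $W$ is either $\{v_1\}$ with $\delta(v_1,\x)=1$ or empty, so the selected winner has in-degree at most $1$ and the gap is at least $n-2$. The expected gap on $D_1$ is therefore at least $k(n-2)/n=\Omega(k)$, which is $\Omega(\sqrt{n})$ as soon as $k\geq\sqrt{n}$.

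Distribution $D_2$ is the ``hidden star'': pick $u^*\in N$ uniformly at random together with a uniformly random set $T\subseteq N\setminus\{u^*\}$ of size $d=\lceil c\sqrt{n}\rceil$; let every $v\in T$ nominate $u^*$ and let every other vertex nominate an independent uniformly random target in $N\setminus\{v\}$. A Chernoff plus union bound gives $\max_{v\neq u^*}\delta(v,\x)=O(\log n)$ with probability $1-o(1)$, while $\delta(u^*,\x)\geq d$ deterministically. For a fixed sample $S$ of size $k=O(\sqrt{n})$, the joint event $u^*\notin S$ and $S\cap T=\emptyset$ has probability bounded below by a constant, since the hypergeometric tail $\binom{n-1-k}{d}/\binom{n-1}{d}$ is $\Omega(1)$ whenever $kd/n=O(1)$. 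Under this event $u^*\notin W$ and every vertex of $W$ has in-degree $O(\log n)$, so the gap is at least $d-O(\log n)=\Omega(\sqrt{n})$.

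Combining: if $k\geq\sqrt{n}$ the $D_1$ contribution to $\mathbb{E}_D[\text{gap}]$ is already $\Omega(\sqrt{n})$, and otherwise the $D_2$ contribution is. Yao's principle then transfers this lower bound to every randomized strong sample mechanism. The step I expect to require the most care is the $D_2$ analysis: I need to verify that conditioning on the sample missing $T$ does not skew the random nominations of $N\setminus T$ in a way that creates a large-degree challenger inside $W$, and to track the constants consistently so that the mixture delivers $\Omega(\sqrt{n})$ uniformly in~$k$. The rest is a routine Yao wrapper plus Chernoff/hypergeometric estimates.
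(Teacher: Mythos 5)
Your argument is correct in outline but takes a genuinely different and considerably heavier route than the paper. The paper does not use Yao's principle or any concentration inequality: after invoking Theorem~\ref{LB:StrongSamplek} to conclude that the sample distribution is a fixed distribution over subsets of $N$, it splits directly on the \emph{marginal} probabilities $\Pr{v\in S}$. If some vertex has $\Pr{v\in S}\geq 1/\sqrt{n}$, a full star centered at that vertex already loses $\Theta(\sqrt{n})$ in expectation (the center is sampled, hence disqualified, with probability at least $1/\sqrt{n}$, and then the winner has in-degree at most $1$); if every vertex has $\Pr{v\in S}\leq 1/\sqrt{n}$, a star of in-degree $\sqrt{n}/2$ works, because by a union bound the probability that any of its $\sqrt{n}/2$ nominators lands in the sample is at most $1/2$, so with probability at least $1/2$ the center is not even nominated by $S$. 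This keys on marginals rather than on the sample size $k$, which is cleaner since a randomized mechanism need not have a well-defined $k$; your Yao wrapper is precisely what you need to reduce to fixed $S$ and hence fixed $k$, and it does work, at the cost of Chernoff and hypergeometric estimates that the paper avoids entirely. One step of your $D_2$ analysis needs repair: the event $\{u^*\notin S\}\cap\{S\cap T=\emptyset\}$ does \emph{not} imply $u^*\notin W$, because the vertices of $S$ lie outside $T$ and therefore nominate uniformly random targets, one of which may be $u^*$ (in which case the mechanism can simply crown $u^*$ and the gap is $0$). You must additionally condition on no vertex of $S$ nominating $u^*$, which costs only a factor $(1-1/(n-1))^{k}=1-o(1)$ for $k=O(\sqrt{n})$, so the good event retains constant probability and your bound survives; but as written the claim is false. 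With that patch, and the verification you already flagged about challenger degrees inside $W$, your proof goes through; it is simply a more elaborate path to the same $\Omega(\sqrt{n})$ bound.
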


\begin{proof}
	By Theorem~\ref{LB:StrongSamplek}, in any deterministic strong sample mechanism, the sample set is the same for any input graph $\x$. Hence, in a randomized strong sample mechanism, the probability that a vertex $u$ belongs in the sample set, is affected only by the sample functions used by the mechanism. As such, it is independent of the input graph. Then for any such mechanism we can construct graphs which yield additive approximation $\Omega(\sqrt{n})$.
	
	First, if there exists any vertex $v \in N$ with $\Pr{v \in S}> 1/\sqrt{n}$, then consider a nomination profile consisting of vertex $v$ having maximum in-degree $\Delta=n-1$ (i.e., all other vertices are pointing to it), with all other vertices having in-degree either $1$ or $0$. Since $u^*$ belongs to the sample (and, hence, cannot be the winner) with probability at least $1/\sqrt{n}$, the expected degree of the winner is at most $1+(n-1)(1-1/\sqrt{n}) = \Delta - \Theta(\sqrt{n})$.
	
	Otherwise, assume that every vertex $v \in N$ has probability at most $1/\sqrt{n}$ of being selected in the sample set. Consider a nomination profile with a vertex $u^* \in N$ having maximum degree $\Delta=\sqrt{n}/2$ and all other vertices having in-degree either $0$ or $1$. Consider a vertex $u$ pointing to vertex $u^*$. The probability that $u$ belongs to the sample is at most $1/\sqrt{n}$. Hence, by the union bound, the probability that some of the $\sqrt{n}/2$ vertices pointing to $u^*$ is selected in the sample set is at most $1/2$. Hence, the probability that $u^*$ is returned as the winner is not higher than $1/2$ and the expected in-degree of the winner is at most $1+\sqrt{n}/2 \cdot 1/2 =\Delta - \Theta(\sqrt{n})$.
\end{proof}

%\subsection{Multiple edges and abstains: Lower Bound}\label{sec:general-lower-bound}
\subsection{General Lower Bound}\label{sec:general-lower-bound}

\begin{figure*}[tbh]
	\includegraphics[width=\textwidth]{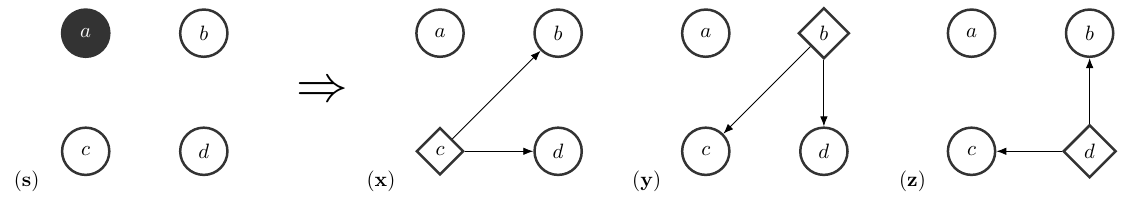}
	\caption{Vertex $a$ is the winner in profile $\mathbf{s}$. This leads to the three profiles $\x$,
		$\y$ and $\z$, where each diamond-shaped vertex cannot be the winner. }
	\label{fig:4GLB:0}
\end{figure*}

	\begin{figure*}[tb]
	\hspace*{\fill}%
	\includegraphics[width=\textwidth]{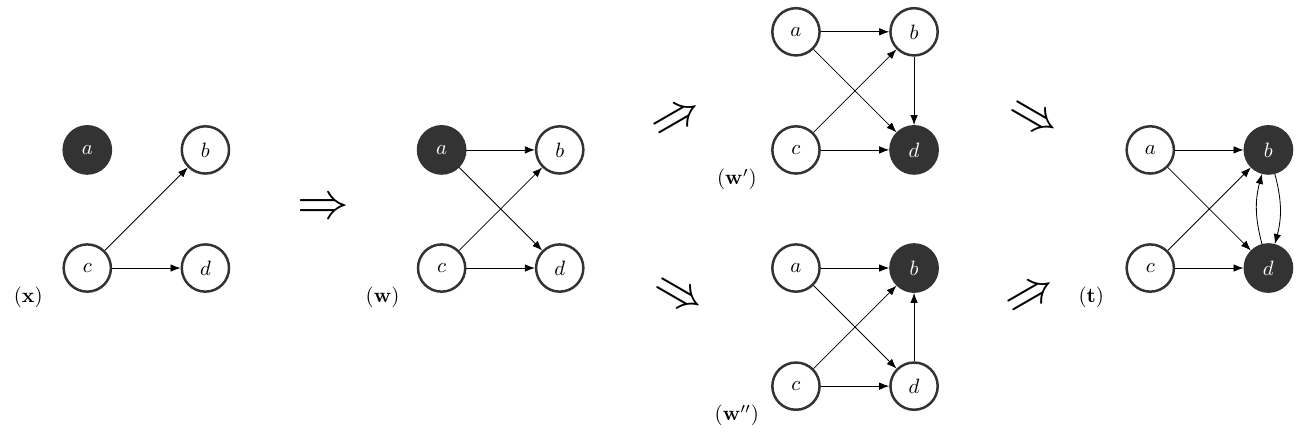}
	\caption{Case 1 in the proof of Theorem~\ref{thm:lower-general}. We assume that $a$ is the winner in profile $\x$. When $a$ adds votes, the profile  $\mathbf{w}$ is created.  
		Then, in the upper profile $\mathbf{w'}$, vertex $d$ must win (for an additive approximation guarantee strictly less than $3$) and in the lower profile $\mathbf{w''}$, vertex $b$ must win. This however leads to the final profile $\mathbf{t}$, where both $d$ and $b$ must win, due to impartiality ---a contradiction.}
	\label{fig:4GLB:no_a}
\end{figure*}

Our last result is a lower bound for all
deterministic impartial mechanisms in the most general model of
Alon et al.~\cite{alon11}, where each agent can nominate multiple other agents or even abstain. We remark that our current proof applies to mechanisms that always select a winner.

% For this case, we show that for $n \leq 4$, there is no deterministic
% mechanism with additive approximation less than $n-1$ can exist,
% leading to a global lower bound for additive approximation more than
% $3$.

\begin{theorem}\label{thm:lower-general}
	There is no impartial deterministic $\alpha$-additive mechanism for $\alpha\leq 2$.
\end{theorem}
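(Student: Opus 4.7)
The plan is to argue by contradiction. Assume $f$ is an impartial deterministic mechanism that always returns a winner and achieves additive error at most $2$ on every profile; the aim is to exhibit a profile $\x$ with $\Delta(\x) - \delta(f(\x),\x) \geq 3$.

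The starting point is two structural observations. First, because $f$ always picks a winner, it picks some vertex on the empty graph; by relabeling, we may assume $f(\emptyset)=1$. Second, the additive-$2$ hypothesis forces $f$'s hand on ``clear-cut'' profiles: whenever a single vertex $v$ has in-degree at least $3$ and every other vertex has in-degree $0$, only $v$ has in-degree within $2$ of $\Delta$, so $f$ must output $v$. Impartiality then glues these facts together via the rule that whenever profiles $\x$ and $\x'$ differ only in a single vertex $u$'s outgoing edges, $u$ wins on $\x$ if and only if $u$ wins on $\x'$.

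My plan is to work with $n$ a sufficiently large fixed constant and construct a short collection of profiles connected by single-vertex edge modifications, together with several ``anchor'' profiles on which the additive bound pins down the winner, typically stars directed at distinct vertices. Starting simultaneously from $\emptyset$ (where $f=1$) and from an anchor such as the star at $v_2$ (where $f$ is forced to be $v_2$), I would trace the impartiality constraints along chains connecting them: at each step, the vertex whose outgoing edges just changed either keeps winning or keeps losing. By combining these constraints from both endpoints with auxiliary single-vertex modifications that attach additional forced-winner anchors, the goal is to pin down $f$'s behavior on some intermediate profile to a vertex whose in-degree is at most $\Delta - 3$, contradicting the $2$-additive assumption.

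The hardest part will be closing the case analysis. Impartiality yields only one constraint per single-vertex edge change, so along any single chain the mechanism retains considerable freedom on profiles of moderate maximum in-degree, where the additive bound is not binding. Overcoming this requires pooling constraints from several short chains branching off the intermediate profiles, each terminating at a distinct forced anchor. The technical heart of the proof is a detailed case analysis verifying that no consistent assignment of winners across this network of profiles can simultaneously respect every impartiality constraint, every forced-anchor choice, and the $2$-additive bound, so that at least one profile must incur additive error at least $3$.
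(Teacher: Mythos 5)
Your overall strategy is the right one, and it matches the paper's: assume a $2$-additive impartial deterministic mechanism that always selects a winner, fix its choice on the empty profile, use the additive bound to force the winner on profiles where one vertex's in-degree exceeds all competitors' by more than $2$, and propagate constraints along chains of single-vertex deviations until two chains force different winners on the same profile. Your two structural observations (the empty-profile anchor and the forced-winner anchors) are both correct and are exactly the tools the paper uses.

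However, what you have written is a plan, not a proof, and the gap is precisely where you locate it yourself: the case analysis. The entire mathematical content of this theorem lies in exhibiting a \emph{specific, finite} network of profiles and verifying that no assignment of winners is consistent with impartiality and $2$-additivity; you have not produced that network, and you give no evidence that your proposed construction (large $n$, chains from the empty graph to star anchors) actually closes. There are real obstructions you would hit: pure star anchors are reached from the empty profile only after several deviations by \emph{losing} vertices, and impartiality applied to a losing vertex yields only the weak constraint that it keeps losing, so a single chain pins down almost nothing on intermediate profiles. The paper resolves this with a much tighter construction on just $n=4$ vertices $a,b,c,d$: after fixing $f(\emptyset)=a$, it considers the three profiles where one of $b,c,d$ votes for the other two (so the deviator is excluded by impartiality), splits on whether $a$ wins any of them, and in each branch engineers profiles where one vertex has in-degree $3$ and the impartiality-excluded vertices leave only one legal winner; two such branches then collide on a common profile (e.g., $b$ and $d$ each forced to win the profile where they vote for each other), or the mutual-clique profile forces the isolated vertex $a$ to win with in-degree $0$ against $\Delta=3$. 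Until you carry out an analysis of this kind explicitly and check every branch, the claim is not established; as it stands the hardest $90\%$ of the argument is deferred to ``a detailed case analysis'' whose existence is exactly what needs to be proved.
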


\begin{proof}
	Let $f$ be a deterministic impartial mechanism and, for the sake of contradiction, assume that it
	achieves additive approximation at most equal to $2$. We will show that there is a profile with four
	vertices (denoted by $a,b,c$ and $d$), in which the winner has
	in-degree $0$, while the maximum in-degree is $3$, which leads
	to a contradiction.
	
	We first consider the profile with no edges, say $\s$, and let us assume, without loss of generality,
	that the winner is $a$ (see Figure \ref{fig:4GLB:0}).
	Now consider the three profiles $\x$, $\y$ and $\z$ produced when each
	of the other three vertices $c$, $b$ and $d$ vote for the other two of
	them, respectively ($c$ votes for $d$ and $b$, $b$ votes for $c$ and
	$d$, and $d$ votes for $b$ and $c$, as shown in Figure
	\ref{fig:4GLB:0}). In all these profiles, the voter (the vertex which changes its outgoing edges, compared to profile $\s$) cannot be the
	winner since this would break impartiality. Focus for example on the
	profile $\x$. Since $c$ cannot be the winner, it must be either $a$, $b$ or
	$d$. There are essentially two cases, which we
	treat separately.

	\paragraph*{Case 1: $a$ is the winner for at least one of  $\,\x,\y,\z.\,$ } Consider the profile $\x$, where vertex $c$ votes for both $b$ and $d$ and
	assume that $a=f(\x)$ (see Figure~\ref{fig:4GLB:no_a}). We let $a$
	vote for both $b$ and $d$, to get the profile
	$\mathbf{w}=(\{(a,b)(a,d)\},\x_{-a})$. Impartiality implies that $a=f(\mathbf{w})$.

	On the one hand, if $b$ votes for $d$ (profile
	$\mathbf{w}'=(\{(b,d)\},\mathbf{w}_{-b})$), impartiality implies that
	$b\neq f(\mathbf{w}')$ and approximation allows only $f(\mathbf{w}')=d$. 
	On the other hand, if $d$ votes for $b$ (profile
	$\mathbf{w}''=(\{(d,b)\},\mathbf{w}_{-d})$), by similar arguments we have
	$f(\mathbf{w}'')=b$ (see Figure \ref{fig:4GLB:no_a}). Now, consider
	the profile $\ttt$ where both $b$ and $d$ vote for each other,
	i.e., $\ttt=(\{(d,b)\},\mathbf{w'}_{-d})$ and, at the same time,
	$\ttt=(\{(b,d)\},\mathbf{w''}_{-b})$.  Impartiality (applied to
	$\mathbf{w}'$ and $\mathbf{w}''$, respectively) implies that both $b$
	and $d$ must be winners which is absurd and leads to a contradiction.
	Similar arguments would apply for the other cases, establishing that
	$a$ cannot be the winner in any of the profiles $\x, \y$ or $\z$.

		\begin{figure*}[h!bt]
		\centering
		\subfloat[Same winner in profiles $\x$ and $\y$.]{\label{fig:4GLB:BigTable:case21}
			\centering
			\includegraphics[scale=.5]{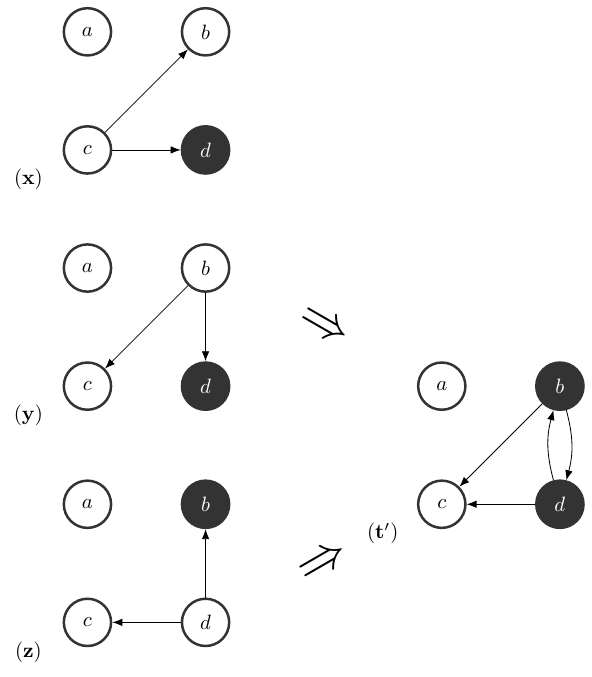}}\hfill
		\subfloat[{Profiles $\x,\y,\z$ have different winners.}]{ \label{fig:4GLB:BigTable:case22}
			\centering
			\includegraphics[scale=.5]{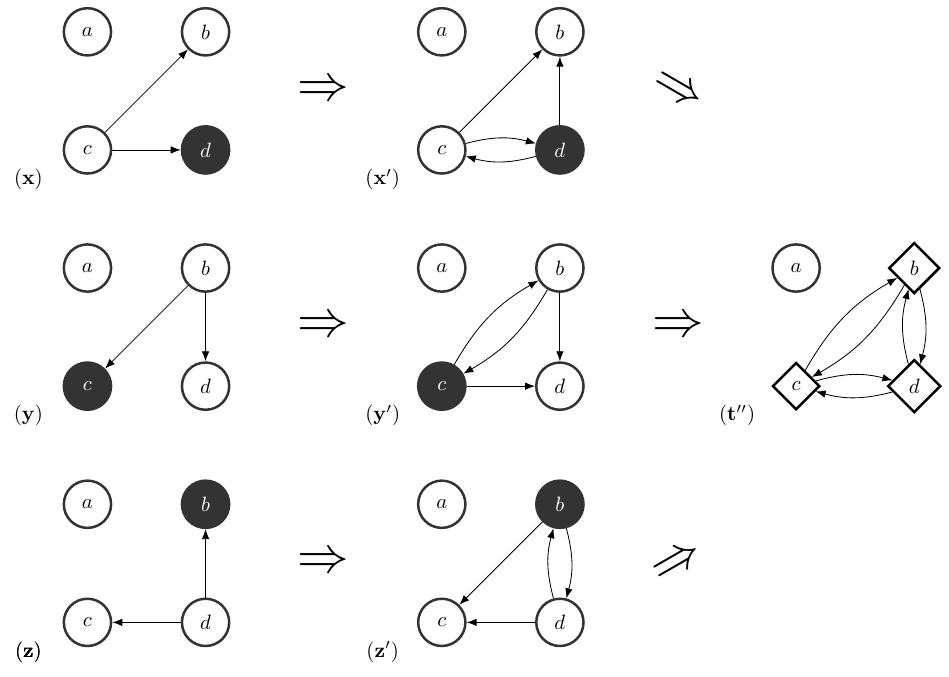}}
		\caption{Case 2 in the proof of Theorem~\ref{thm:lower-general}. Vertex $a$ is not the winner in any of the  profiles $\x, \y, \z$. In figure \ref{fig:4GLB:BigTable:case21}, we assume that two of the winners in $\x,\y,\z$ are the same, and this leads to a mechanism with two winners ---a contradiction. In figure \ref{fig:4GLB:BigTable:case22}, we assume that no two profiles among $\x,\y,\z$ have the same winner. Impartiality implies that in the rightmost profile $\mathbf{t}''$ where three vertices have in-degree $2$, the winner must be $a$, with in-degree $0$. If $a$ votes towards any other vertex, impartiality demands that $a$ remains the winner, with an in-degree of $0$, while the maximum in-degree is $3$ ---a contradiction.}
		\label{fig:4GLB:BigTable}
	\end{figure*}
	
	\paragraph*{Case 2:  $a$ is not the winner for any $\x, \y, \z.\,$ }
	
	In this case, due to impartiality, only vertices with in-degree $1$
	are possible winners. Hence, we are left only with two sub-cases;
	either two of these profiles share the same winner or all of them  have a different winner.
	
	In the first sub-case, consider (without loss of generality) the scenario where $f(\x)=f(\y)$. Impartiality, plus the fact that $a$ is not the winner in $\x$, imply that $f(\x)=d$.
	Assume that $f(\z)=b$ (illustrated in Figure~\ref{fig:4GLB:BigTable:case21}). The alternative case $f(\z)=c$ follows through similar arguments. In profile $\y$, we let $d$ add $2$ votes and create profile $\mathbf{t'}=(\{(d,b)(d,c)\},\y_{-d})$. By impartiality, 
	$f(\mathbf{t'})=d$. In profile $\z$, we let $b$ add $2$ votes and create again the profile $(\{(b,c)(b,d)\},\z_{-b})=\mathbf{t'}$.
	Note that these graphs are, indeed, the same. By impartiality $f(\mathbf{t'})=b$, hence the impartial mechanism $f$ at profile $\mathbf{t'}$
	must award two vertices, a contradiction. Similar arguments hold in all the cases where two of the profiles $\x,\y,\z$ share the same winner.
		
	We are left now with the case where all these profiles, $\x$, $\y$ and
	$\z$ have different winners, where none of them is $a$. There are $2$ possible such scenarios: 
	$f(\x)=d $, $f(\y)=c$ and $f(\z)=b$ (see Figure~\ref{fig:4GLB:BigTable:case22}), or $f(\x)=b $, $f(\y)=d$ and $f(\z)=c$. Consider the first one (similar arguments hold also for the second). From these profiles $\x$, $\y$ and $\z$ we reach the profiles $\x'=(\{(d,b)(d,c)\},\x_{-d})$,
	$\y'=(\{(c,b)(c,d)\},\y_{-c})$ and $\z'=(\{(b,c)(b,d)\},\z_{-d})$, by letting the respective winners to add edges. Because of impartiality, all the winners are preserved, i.e., $f(\x)=f(\x')$, $f(\y)=f(\y')$ and $f(\z)=f(\z')$. Let us now focus on profile $\x'$. By letting vertex $b$ add edges $(b,c)$ and $(b,d)$, we create the profile $\mathbf{t}''=(\{(b,c)(b,d)\},\x'_{-b})$: a directed clique on the vertices $b,c$ and $d$ and the vertex $a$ with no incoming nor outgoing edges (see Figure~\ref{fig:4GLB:BigTable:case22}). Focusing now on profile $\y'$, we reach the profile $(\{(d,b)(d,c)\},\y'_{-d})=\mathbf{t}''$ by a deviation of vertex $d$; the same profile as before. In a similar fashion, on profile $\z'$ we reach the profile $(\{(c,b)(c,d)\},\z'_{-c})=\mathbf{t}''$ by a deviation of $c$. By impartiality,  $f(\mathbf{t}'') \notin \{b,c,d\}$,
	which implies that $f(\mathbf{t}'')=a$. Now, if $a$ votes for at least any other vertex,
	impartiality implies that $a$ must remain the winner, while the
	nominees of $a$ will have in-degree $3$, contradicting the
	approximation guarantee of $f$.
\end{proof}

%\begin{acknowledgements}
%	We would like to thank the anonymous reviewers for their constructive comments
%\end{acknowledgements}

% Authors must disclose all relationships or interests that 
% could have direct or potential influence or impart bias on 
% the work: 
%
% \section*{Conflict of interest}
%
% The authors declare that they have no conflict of interest.

% BibTeX users please use one of
%\bibliographystyle{spbasic}      % basic style, author-year citations
\bibliographystyle{plain}      % mathematics and physical sciences
\bibliography{references}   % name your BibTeX data base

% Non-BibTeX users please use
%\begin{thebibliography}{}
%
% and use \bibitem to create references. Consult the Instructions
% for authors for reference list style.
%

\end{document}